\documentclass[final]{colt2022-test} 


\title[Improved Parallel Algorithm for Minimum Cost Submodular Cover Problem]{Improved Parallel Algorithm for Minimum Cost Submodular Cover Problem}
\usepackage{times}
\usepackage{indentfirst}
\usepackage{algorithm,algpseudocode,color,algcompatible}
\usepackage{hyperref}
\usepackage{pstricks}
\usepackage{tikz}
\usepackage{caption}
\usepackage{natbib}



 \coltauthor{\Name{Yingli Ran} \Email{ranyingli@zjnu.edu.cn}\and
  \Name{Zhao Zhang}\thanks{Corresponding author} \Email{hxhzz@sina.com}\\
  \addr College of Mathematics and Computer Sciences, Zhejiang Normal University
   \AND
 \Name{Shaojie Tang} \Email{shaojie.tang@utdallas.edu}\\
 \addr Naveen Jindal School of Management, University of Texas at Dallas
}

\begin{document}

\maketitle

\begin{abstract}
In the minimum cost submodular cover problem (MinSMC), we are given a monotone nondecreasing submodular function $f\colon 2^V \rightarrow \mathbb{Z}^+$, a linear cost function $c: V\rightarrow \mathbb R^{+}$, and an integer $k\leq f(V)$, the goal is to find a subset $A\subseteq V$ with the minimum cost such that $f(A)\geq k$. The MinSMC  can be found at the heart of many machine learning and data mining applications. In this paper, we design a parallel algorithm for the MinSMC  that takes at most $O(\frac{\log km\log k(\log m+\log\log mk)}{\varepsilon^4})$ adaptive rounds, and it achieves an approximation ratio of $\frac{H(\min\{\Delta,k\})}{1-5\varepsilon}$ with probability at least  $1-3\varepsilon$, where $\Delta=\max_{v\in V}f(v)$, $H(\cdot)$ is the Harmonic number, $m=|V|$, and $\varepsilon$ is a constant in $(0,\frac{1}{5})$. 
\end{abstract}

\begin{keywords}
minimum cost submodular cover; approximation algorithm; parallel algorithm.
\end{keywords}
\section{Introduction}
Recently, submodular optimization has attracted considerable interest in machine learning and data mining, and optimizing a submodular function can be found in a variety of applications, including viral marketing \citep{Kempe}, information gathering \citep{Krause}, and active learning \citep{Golovin}. In this paper, we design parallel approximation algorithms for the fundamental problem of {\em minimum cost submodular cover} (MinSMC). The input of MinSMC is a set $V$ of $m$ elements. Given a monotonically nondecreasing submodular function $f\colon 2^V \rightarrow \mathbb{Z}^+$, a linear cost function $c: V\rightarrow \mathbb R^{+}$, and an integer $k\leq f(V)$, the goal of the MinSMC  is to find a subset $A\subseteq V$ with the minimum cost such that $f(A)\geq k$, where the cost of $A$ is $c(A)=\sum\limits_{v\in A}c(v)$. The MinSMC has numerous applications, including data summarization \citep{Tschiatschek} and recommender systems \citep{El-Arini}. In the example of data summarization, we are given a set of data, our goal is to select a cheapest set of data, whose representativeness meets some minimum requirement. Many commonly used utility functions exhibit submodularity, a natural diminishing returns property, leading to the formulation of a  MinSMC \citep{Mirzasoleiman1}. To solve MinSMC,  \cite{Wolsey} developed a centralized sequential greedy algorithm which an approximation ratio of $H(\Delta)$, where $H(\Delta)=\sum_{i=1}^{\Delta}1/i$ is the $\Delta$-th Harmonic number and $\Delta=\max_{v\in V}f(v)$.

Unfortunately, the aforementioned centralized sequential greedy method requires $\Omega(n)$ adaptive rounds. A formal definition of \emph{adaptive round} is presented in Definition \ref{def:1}. To this end, we are interested  designing effective parallel algorithms for MinSMC.  The state-of-the-art  parallel algorithm for the {\em unweighted} MinSMC was presented in \citep{Fahrbach}; it takes at most $O(\log(m\log k)\log k)$ adaptive rounds to produce a solution sized at most $O(\log k|OPT|)$, where $OPT$ is the optimal solution.  
Note that their approximation ratio is dependent on $k$ which might be as large as $\Theta(n)$,  whereas  $\Delta$ might be much smaller than $k$. In this work we consider a general \emph{weighted} MinSMC and we develop an effective and efficient parallel algorithm,  whose approximation ratio is arbitrarily close to $H(\Delta)$.

\subsection{Related Works}

 Recently,  \citet{Balkanski} introduced the concept of ``adaptive complexity'' which is defined as the  number of parallel rounds required to achieve a constant factor approximation ratio. We use the same notation to measure the running time of a parallel algorithm. \cite{Chekuri} describe parallel algorithms for approximately maximizing the multilinear relaxation of a monotone submodular
function subject to packing constraints. Both aforementioned studies focus on developing effective parallel algorithms for constrained submodular maximization problem, whereas we study the MinSMC. For the MinSMC, Wolsey \citep{Wolsey} presented a greedy algorithm that achieves an approximation ratio of $H(\Delta)$. For the \emph{unweighted} version of the MinSMC, \citet{Fahrbach} developed a parallel algorithm whose approximation ratio is at most of $O(\log k)$ and it takes $O(\log(m\log k)\log k)$ adaptive rounds. For the set cover problem (finding the smallest subcollection of sets that covers all elements), a special case of the MinSMC, \citet{Berger} provided the first parallel algorithm whose approximation guarantee is similar to that of the centralized greedy algorithm. They used the bucketing technique to obtain a $(1+\varepsilon)H(n)$-approximation in $O(\log^5 M)$ rounds, where $M$ is the total sum of the sets' sizes.  \citet{Rajagopalan} improved the number of rounds to $O(\log^3(Mn))$ at the cost of a larger approximation ratio of $2(1+\varepsilon)H(n)$. \citet{Blelloch} further enhanced these results by obtaining a $(1+\varepsilon)H(n)$-approximation algorithm in $O(\log^2 M)$ rounds.  We list the performance bounds of the closely related studies in Table \ref{tab1}.

\begin{table}[htpb]\footnotesize
\centering
\begin{tabular}{c cc}
\hline
\textbf{Source} & \textbf{Approximation ratio}& \textbf{\# of adaptive rounds}\\
\hline
\underline{Minimum submodular cover} &  \\
\hline
\textbf{Our algorithm} & $\frac{H(\min\{\Delta,k\})}{1-5\varepsilon}$ & $O(\frac{\log km\log k(\log m+\log\log mk)}{\varepsilon^4})$\\
\hline
\citep{Wolsey}(sequential) & $H(\Delta)$ & $\Omega(m)$ \\
\hline
\citep{Fahrbach}(unweighted MinSMC) & $O(\log k)$ & $O(\log(m\log k)\log k)$\\
\hline
\underline{Minimum set cover} &  \\
\hline
\citep{Berger} & $(1+\varepsilon)H(n)$ & $O(\log^5 M)$\\
\hline
\citep{Rajagopalan} & $2(1+\varepsilon)H(n)$ & $O(\log^3(Mn))$\\
\hline
\citep{Blelloch} & $(1+\varepsilon)H(n)$ & $O(\log^2 M)$\\
\hline
\end{tabular}
  \caption{Performance bounds of the closely related studies.}\label{tab1}
\end{table}


\subsection{Our contributions and technical overview}
In this paper, we design a parallel algorithm for the MinSMC. Our algorithm achieves a near-optimal $\frac{H(\min\{\Delta,k\})}{1-5\varepsilon}$-approximation with probability of at least $1-3\varepsilon$, and it takes  poly-logarithmic $O(\frac{\log km\log k(\log m+\log\log mk)}{\varepsilon^4})$ adaptive rounds, where $\varepsilon$ is a constant in $(0,\frac{1}{5})$. 
Note that the aforementioned $O(\log k)$-approximation algorithm \citep{Fahrbach} only works for the unweighted  MinSMC, and $\Delta$ might be much smaller than $k$. 
One naive approach to solve our problem is to iteratively call a parallel algorithm for the submodular maximization problem with a knapsack constraint \citep{Chekuri} until we find a feasible solution to the MinSMC. Unfortunately, the approximation ratio of this method is $O(\log k)$. Nonetheless, further effort is needed to improve this ratio to  $O(\log \Delta)$.
We build our algorithm through novel combinations of the  ideas of multilayer bucket \citep{Berger}, maximal nearly independent set \citep{Blelloch}, and random sample \citep{Fahrbach}. Note that both \citet{Berger} and \citet{Blelloch} focus on the set cover problem, which is a special case of the MinSMC. Their approach can not be applied to solving the MinSMC directly. When applied separately, both of them encounter some structural difficulties.

\textbf{High-Level Intuition of Our Proposed Approach:} Inspired by the parallel algorithm \citep{Blelloch} for the minimum set cover problem, we design a parallel mechanism to imitate the sequential greedy algorithm. In each iteration of our algorithm, we only consider those elements with similar marginal profit-to-cost ratios. We find a nearly independent set from those elements such that the  profit-to-cost ratio of the  nearly  independent set  is almost the same as that of a best single element. Unfortunately, it is not clear how to adapt the solution from \citep{Blelloch} to find a  nearly independent set in the context of MinSMC. To overcome this challenge, we are inspired by \citep{Fahrbach} to use a randomized selection and guessing technique. Note that the original solution developed in \citep{Fahrbach} only works for the unweighted MinSMC. To deal with the weighted case, we further group elements into buckets such that all elements from the same bucket have similar profit-to-cost ratios and marginal profits. In this way, all elements from the same bucket have similar costs. This idea of a multi-layer-bucket originated in \citep{Berger} for solving the minimum set cover problem, we extend their idea to the general MinSMC. 

The remaining part of this paper is organized as follows. The design of our parallel algorithm  and its analysis are presented in Section \ref{sec2} and \ref{sec3}. Section \ref{sec3} concludes the paper with some discussions on future work.
\section{Parallel Algorithm and Analysis for MinSMC}\label{sec2}

\subsection{Preliminaries}\label{subsec2.1}

\begin{definition}[Submodular and monotone nondecreasing function]
Given a set $V$ consisting of elements $v_1,v_2\ldots,v_m$, and a function $f\colon 2^V \rightarrow \mathbb{R}^+$, $f$ is {\em submodular} if $f(A)+f(B)\geq f(A\cap B)+f(A\cup B)$ for any $A,B\subseteq V$; $f$ is {\em monotone non-decreasing} if $f(A)\geq f(B)$ for any $B\subseteq A \subseteq V$.
\end{definition}

For any two sets $A,B\subseteq V$, denote $f_A(B)=f(A\cup B)-f(A)$ to be the {\em marginal profit} of $B$ over $A$. Assume $f(V)=n$. In this paper, $f$ is always assumed to be an integer-valued, monotone nondecreasing, submodular function. It can be verified that for any $A\subseteq V$, the marginal profit function $f_A(\cdot)$ is also a monotone nondecreasing, submodular function.

\begin{definition}[Minimum Submodular Cover Problem (MinSMC)]
{\rm Given a monotone nondecreasing integer submodular function $f\colon 2^V \rightarrow \mathbb{Z}^+$, a cost function $c: V\rightarrow \mathbb R^{+}$, an integer $k\leq n$, MinSMC can be formulated as follows:
\begin{equation}\label{eq0625-1}
\min\{c(A)\colon A\subseteq V, f(A)\geq k\},
\end{equation}
where $c(A)=\sum\limits_{v\in A}c(v)$.}
\end{definition}
Define a function $g$ as $g(A)=\min\{f(A),k\}$ for any subset $A\subseteq V$. When $f$ is a monotone nondecreasing submodular function, it can be verified that $g$ is also a monotone nondecreasing submodular function. Note that $\max\{g(A)\colon A\subseteq V\}=k=g(V)$, and for the modified MinSMC
\begin{equation}\label{eq0625-2}
\min\{c(A)\colon g(A)=g(V)\},
\end{equation}
a set $A$ is feasible to \eqref{eq0625-2} if and only if $A$ is feasible to \eqref{eq0625-1}. Hence, problems \eqref{eq0625-1} and \eqref{eq0625-2} are equivalent in terms of approximability, that is, $A$ is an $\alpha$-approximate solution to problem \eqref{eq0625-2} if and only if $A$ is an $\alpha$-approximate solution to problem \eqref{eq0625-1}. In the following, we concentrate on the modified MinSMC \eqref{eq0625-2}.

We next introduce the notation of  \emph{$\varepsilon$-nearly independent set}, which is adapted from the concept of {\em $\varepsilon$-maximal nearly independent set} ($\varepsilon$-MaxNIS) \citep{Blelloch}. 

\begin{definition}[$\varepsilon$-nearly independent set ($\varepsilon$-NIS)]\label{def0703-1}
{\rm For a real number $\varepsilon>0$ and a set $S\subseteq V$, we say that a set $J\subseteq V\setminus S$ is an $\varepsilon$-NIS with respect to $S$ and $\varepsilon$ if $J$ satisfies the following nearly independent property:
\begin{equation}\label{eq0709-1}
g_{S}(J)\geq (1-\varepsilon)^2\sum_{v\in J}g_{S}(v).
\end{equation}}

\end{definition}

At last, we introduce the notation of \emph{adaptive round} from \citep{Balkanski}. For simplicity, we use \emph{adaptive round} and \emph{round} interchangeably in the rest of this paper.
\begin{definition}[Adaptive round]
 Given a value oracle $f$, which  receives a set $S \subseteq N$ and returns its value $f(S)$, we say an algorithm takes $\gamma$ adaptive rounds if (1) every query to $f$ in round $i \in [\gamma]$ depends only on the answers to queries in rounds $1$ to $i-1$, and (2) it performs polynomially-many parallel queries in each adaptive round.
\label{def:1}
\end{definition}

\subsection{Outline of Our Algorithm}

We first present an outline of our algorithm. The main algorithm is presented in Algorithm \ref{algo1}, which can be viewed as a parallel implementation of the sequential greedy algorithm \citep{Wolsey}. The overall design of our algorithm follows the framework of adaptive greedy cover algorithm presented in \citep{Fahrbach} for the unweighted MinSMC. However, we need to design a new algorithm and  perform more sophisticated analysis to achieve a better approximation ratio  for the general MinSMC.
 Our basic idea is to process elements in batches in accordance with their profit-to-cost ratio and profits. Specifically, in each iteration of our algorithm, we first collect a group of elements with similar profit-to-cost ratio and profits, then we pick an $\varepsilon$-NIS from them using Algorithm \ref{algo2} and add it to the solution. The main idea of Algorithm \ref{algo2} is to guess the size of a largest $\varepsilon$-NIS, and for each guess, we use a mean function (Algorithm \ref{algo3}) to verify whether it is an $\varepsilon$-NIS or not. Let $c_{\max}$ and $c_{\min}$ be the maximum and the minimum cost of elements, respectively.
 Unfortunately, the running time of  Algorithm \ref{algo1} is dependent on $c_{\max}/c_{\min}$, whose value could be arbitrarily large.  In order to achieve the running time of $O(\frac{\log mk\log k\log (m\log mk)}{\varepsilon^4})$, we add a preprocessing stage to Algorithm \ref{algo1} to ensure that $c_{\max}/c_{\min}$ is upper bounded by $O(\log mk)$.  Our final algorithm is presented in Algorithm \ref{algo4}. 



\begin{algorithm}[t]
\caption{MinSMC-Par$(V,g,c,k,\varepsilon)$}\label{algo1}
\textbf{Input:} MinSMC-Par instance $(V,g,c,k,\varepsilon)$.

\textbf{Output:} A subset $B\subseteq V$ with $g(B)\geq k$.
	
\begin{algorithmic}[1]
\STATE $t\leftarrow 1$
\STATE $\beta=\max_{v\in V}g(v)/c(v)$\label{line2}
\STATE $\tau=\max_{v\in V}g(v)$\label{line3}
\STATE $T=\log_{1/(1-\varepsilon)}kc_{\max}/c_{\min}$
\STATE $\ell=\log_{1/(1-\varepsilon)}k$
\STATE $t'\leftarrow 1$
\STATE $B_{1}^{1}\leftarrow \emptyset$
\WHILE {$t\leq T$}
\WHILE {$t'\leq \ell$}
\STATE $A_{t}^{t'}=\left\{v\in V
\colon \begin{array}{l}(1-\varepsilon)^t\beta \leq g_{B_{t}^{t'}}(v)/c(v)\leq (1-\varepsilon)^{t-1}\beta,\\ (1-\varepsilon)^{t'}\tau\leq g_{B_{t}^{t'}}(v)\leq (1-\varepsilon)^{t'-1}\tau\end{array}\right\}$\label{line11}
\IF {$A_{t}^{t'}=\emptyset$}
\STATE break and go to line \ref{line19}
\ENDIF
\STATE $J_{t}^{t'}\leftarrow$ NIS$(A_{t}^{t'},B_{t}^{t'},\varepsilon,\ell,(1-\varepsilon)^{t-1}\beta,(1-\varepsilon)^{t'-1}\tau)$\label{line0704-1}
\STATE $B_{t}^{t'+1}\leftarrow B_{t}^{t'}\cup J_{t}^{t'}$\label{line12}
\STATE $t'\leftarrow t'+1$
\ENDWHILE
\IF {$g(B_{t}^{t'})\geq k$}\label{line19}
\STATE break and go to line \ref{line20}
\ENDIF
\STATE $B_{t+1}^{1}\leftarrow B_{t}^{t'}$
\STATE $t\leftarrow t+1$
\ENDWHILE
\STATE return $B\leftarrow B_{t}^{t'}$\label{line20}

\end{algorithmic}
\end{algorithm}

\begin{algorithm}[t]
\caption{NIS$(A,B,\varepsilon,\ell, \beta,\tau)$}\label{algo2}
\textbf{Input:} Two sets $A$ and $B$, two threshold values $\beta$ and $\tau$, a constant $0<\varepsilon<1/4$, a parameter
$\ell$.

\textbf{Output:} An $\varepsilon$-nearly independent set $J\subseteq A$ with respect to $B$.
	
\begin{algorithmic}[1]
\STATE $J_1\leftarrow \emptyset$
\STATE $p\leftarrow 1$
\STATE $i\leftarrow -1$
\STATE $A_0\leftarrow A$
\STATE $B_1\leftarrow B$\label{line5}
\STATE $\bar{\varepsilon}\leftarrow \frac{1}{3}(1-\frac{1}{2T\ell})\varepsilon$
\STATE $r\leftarrow \log_{\frac{1}{1-\bar{\varepsilon}}}(2mT\ell)/\varepsilon$
\STATE $\delta\leftarrow \varepsilon/(2rkT^2\ell)$
\WHILE {$p\leq r$}\label{line9}
    \STATE $A_{p}\leftarrow \{v\in A_{p-1}\colon (1-\varepsilon)\beta\leq g_{B_p}(v)/c(v)\leq \beta, (1-\varepsilon)\tau\leq g_{B_p}(v)\leq \tau\}$\label{line6}
    \IF {$A_{p}\leftarrow \emptyset$}
    \STATE break (exit the while loop)
    \ENDIF
    \FOR {$i\leq \log_{1+\bar{\varepsilon}}m$}\label{line14}
    \STATE $t_p\leftarrow \min\{\lfloor(1+\bar{\varepsilon})^i\rfloor,|A_{p}|\}$\label{line15}
    \STATE $\bar{\mu}_p\leftarrow$Mean($B_p,A_p,t_p,\tau,\bar{\varepsilon},\delta$)\label{line17}
    \IF {$\bar{\mu}_p\leq 1-1.5\bar{\varepsilon}$}\label{line18}
    \STATE break (exit the for loop) \label{line181}
    \ENDIF
    \STATE $i\leftarrow i+1$
    \ENDFOR
    \STATE select a $t_p$-set $T_p$ from $A_p$ uniformly at random, and let $J_{p+1}\leftarrow J_{p}\cup T_{p}$\label{line23}
    \STATE $B_{p+1}\leftarrow B_{p}\cup J_{p+1}$\label{line24}
     \IF {$g(B_{p+1})\geq k$}\label{line25}
    \STATE break (exit the while loop)
    \ENDIF
    \STATE $p\leftarrow p+1$
\ENDWHILE
\STATE Return $J_p$\label{line30}
\end{algorithmic}
\end{algorithm}
\subsection{Design of Algorithm \ref{algo1}}\label{subsec2.2}

Now we are ready to present the details of  Algorithm \ref{algo1}. We defer the description of the final Algorithm \ref{algo4} to the next section. 
Let $\beta= \max_{v\in V}g(v)/c(v)$ denote the largest profit-to-cost ratio of a single element. In each round (line \ref{line11} of Algorithm \ref{algo1}), 
we construct a bucket  $A_t^{t'}$ such that all elements in $A_t^{t'}$ have  similar marginal profits and marginal profit-to-cost ratio, e.g., \[A_{t}^{t'}=\left\{v\in V
\colon \begin{array}{l}(1-\varepsilon)^t\beta \leq g_{B_{t}^{t'}}(v)/c(v)\leq (1-\varepsilon)^{t-1}\beta,\\ (1-\varepsilon)^{t'}\tau\leq g_{B_{t}^{t'}}(v)\leq (1-\varepsilon)^{t'-1}\tau\end{array}\right\},\] where $B_{t}^{t'}$ denotes the set of already selected elements before this round, and $\tau=\max_{v\in V}g(v)$ is the largest profit of a single element; then it picks an $\varepsilon$-NIS with respect to $B_{t}^{t'}$  from $A_{t}^{t'}$ using Algorithm \ref{algo2} (see line \ref{line0704-1} and line \ref{line12} of Algorithm \ref{algo1}) and adds it to the solution. A detailed description of Algorithm \ref{algo2} will be provided in the next paragraph.
 In the process of selecting elements, we give higher priority to those elements with higher profit-to-cost ratio. For those elements with the same profit-to-cost ratio, we give higher priority to those with higher marginal profit. We can prove that when the algorithm terminates, with high probability, it outputs a feasible solution with a good approximation.

We next explain Algorithm \ref{algo2} in details. Given two sets $A$ and $B$, a constant $0<\varepsilon<1/4$, the goal of  Algorithm \ref{algo2} is to compute an $\varepsilon$-NIS with respect to $B$ from $A$. For ease of presentation, we call a set consisting of $t$ elements  {\em $t$-set}.  Starting with, $p=1$, $B_1=B$ and $A_0=A$. In the $p$-th round of the  while loop  (line \ref{line9}), we compute a ``good'' $\varepsilon$-NIS with respect to $B_p$ from $A_p$, where $B_p$ is the set of selected elements before round $p$ and $A_p$ is defined in line \ref{line6}. Then we add this  $\varepsilon$-NIS to $B_p$ to obtain $B_{p+1}$. This process takes at most $r$ rounds, where $r= \log_{\frac{1}{1-\bar{\varepsilon}}}(2mT\ell)/\varepsilon$. 
To compute the $\varepsilon$-NIS in each round $p$, we use  a for loop (line \ref{line14}) to guess its size $t_p$.  
For each guess,  we use Algorithm \ref{algo3} to measure the expected quality of a $t_p$-set that is sampled from $A_p$ uniformly at random. Note that we can try all $\log_{1+\bar{\varepsilon}}m$ guesses in parallel.

We next introduce the design of Algorithm \ref{algo3}. We first define a function $I_{t,B,A,\tau,\epsilon}$ as follows. Given two sets $A,B$, a parameter $\tau$, and a real number $0<\varepsilon<1$, for a $t$-set $X$, and an element $x$ from $A\setminus X$, define
$$
I_{t,B,A,\tau,\epsilon}(X,x)=I[g_{B\cup X}(x)\geq (1-\varepsilon)\tau], \mbox{ where $I[\cdot]$ is an indicator function,}
$$
that is, $I_{t,B,A,\tau,\epsilon}(X,x)=1$ if $g_{B\cup X}(x)\geq (1-\varepsilon)\tau$, and $I_{t,B,A,\tau,\epsilon}(X,x)=0$ otherwise. As a convention,
\begin{equation}\label{eq0711-1}
\mbox{if $A\setminus X=\emptyset$, define $I_{t,B,A,\tau,\epsilon}(X,x)=0$.}
\end{equation}

With the above function, Algorithm \ref{algo3} computes $\bar{\mu}_p$, which is the estimated expectation of $I_{t_p,B_p,A_p,\tau,\epsilon}(X,x)$ assuming that  $X$ is a $t_p$-set that is selected from $A_p$ uniformly at random, and $x$ is an element that is drawn uniformly at random from $A_p\setminus X$. It will become clear later that there exists a $t_p$ which ensures that (1) $\bar{\mu}_p\leq 1-1.5\bar{\varepsilon}$, and (2) the random set $T_p$ returned from line \ref{line23} of Algorithm \ref{algo2} is an $\varepsilon$-NIS with respect to $B_p$,  with high probability. 

\begin{algorithm}[htpb]
\caption{Mean($B,A,t,\tau,\bar{\varepsilon},\delta$)}\label{algo3}
\textbf{Input:} ($B,A,t,\tau,\bar{\varepsilon},\delta$).

\textbf{Output:} $\bar{\mu}$.
	
\begin{algorithmic}[1]
\STATE set the number of samples $m'\leftarrow 8\lceil\log(2/\delta)/\bar{\varepsilon}^2\rceil$
\STATE sample $m'$ sets $X_1,\ldots,X_{m'}$ and $m'$ elements $x_1,\ldots,x_{m'}$, where each $X_i$ is a $t$-set selected from $A$ uniformly at random, and $x_i$ is an element sampled from $A\setminus X_i$ uniformly at random
\STATE return $\bar{\mu}\leftarrow \frac{1}{m'}\sum_{i=1}^{m'}I_{t,B,A,\tau,\epsilon}(X_i,x_i)$
\end{algorithmic}
\end{algorithm}

 Unless specified otherwise, we assume $X$ (resp. $X'$) is a $t$-set (resp. $t'$-set) that is selected from $A$ uniformly at random, and $x$ (resp. $x'$) is an element that is drawn uniformly at random from $A\setminus X$ (resp. $A\setminus X'$). We next present a useful lemma to show that $\mathbb{E}_{X, x}[I_{t,B,A,\tau,\epsilon}(X,x)]$ is monotone non-increasing with respect to the sample size $t$.  In the rest of this paper, we will omit the subscript from $\mathbb{E}_{X, x}[\cdot]$ if it is clear from the context, and use a shorthand notation  $I_t(X,x)$ to denote $I_{t,B,A,\tau,\epsilon}(X,x)$.
\begin{lemma}\label{lem0303-2}
Given $B,A,\tau,\epsilon$, suppose $t$ and $t'$ are two integers with $t<t'$. Then
$$
\mathbb{E}[I_t(X,x)]\geq \mathbb{E}[I_{t'}(X',x')].
$$
\end{lemma}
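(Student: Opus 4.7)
The plan is to realize both experiments simultaneously through a single uniformly random permutation $\sigma$ of $A$, so that the two expectations can be compared pointwise rather than through an independent coupling. For any integer $s < |A|$, let $X_s=\{\sigma(1),\ldots,\sigma(s)\}$; by symmetry $X_s$ is marginally a uniform $s$-subset of $A$, and conditional on $X_s$ the element $\sigma(s+1)$ is uniform over $A\setminus X_s$. Consequently $(X_s,\sigma(s+1))$ has the same joint distribution as the pair $(X,x)$ defined in the lemma with $|X|=s$, which gives
\[
\mathbb{E}[I_s(X,x)] \;=\; \Pr\bigl[g_{B\cup X_s}(\sigma(s+1)) \geq (1-\varepsilon)\tau\bigr].
\]

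Using this representation with $s=t$ and $s=t'$ drawn from the \emph{same} $\sigma$, so that the nesting $X_t\subseteq X_{t'}$ holds deterministically, the proof reduces to two short steps. First, since enlarging the conditioning set can only shrink marginal profit of $g_B$ by submodularity, I have $g_{B\cup X_{t'}}(\sigma(t'+1))\leq g_{B\cup X_t}(\sigma(t'+1))$, hence
\[
\mathbb{E}[I_{t'}(X',x')] \;\leq\; \Pr\bigl[g_{B\cup X_t}(\sigma(t'+1)) \geq (1-\varepsilon)\tau\bigr].
\]
Second, conditioning on $X_t$, the tail $(\sigma(t+1),\sigma(t+2),\ldots,\sigma(|A|))$ is a uniformly random ordering of $A\setminus X_t$; in particular $\sigma(t+1)$ and $\sigma(t'+1)$ have identical marginal laws over $A\setminus X_t$. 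Replacing $\sigma(t'+1)$ by $\sigma(t+1)$ in the last displayed probability, and then taking expectation over $X_t$, turns the right-hand side into $\mathbb{E}[I_t(X,x)]$, which is the desired inequality.

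The only edge case to flag is $A\setminus X'=\emptyset$, which happens exactly when $t'=|A|$; here $I_{t'}(X',x')=0$ by the convention in \eqref{eq0711-1} and the claim is immediate. The main obstacle I anticipate is notational rather than conceptual: because $x$ is drawn from $A\setminus X$ it is not independent of $X$, so a naive product coupling of $(X,x)$ with $(X',x')$ cannot both respect marginals and produce a deterministic inclusion $X\subseteq X'$. The permutation-based coupling is precisely what finesses this, as it realizes the required conditional law ``uniform over the unused elements'' automatically; once the coupling is fixed, submodularity and the exchangeability of the trailing coordinates finish the argument without any explicit averaging.
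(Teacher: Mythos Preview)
Your proof is correct and is essentially the same argument as the paper's, phrased more cleanly: the paper sums explicitly over ordered $(t'+1)$-tuples of distinct elements of $A$ (which is exactly your permutation representation), applies submodularity to shrink the conditioning set from the first $t'$ coordinates to the first $t$, and then marginalizes out the unused coordinates---your exchangeability step. The only cosmetic difference is that the paper does the final step as a counting identity rather than invoking that $\sigma(t+1)$ and $\sigma(t'+1)$ have the same conditional law given $X_t$.
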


The following lemma reveals the relation between $\bar{\mu}_p$ and $\mathbb{E}[I_{t_p,B_p,A_p,\tau,\epsilon}(X,x)]$.

\begin{lemma}\label{lem0302-1}
With probability at least $1-\delta$, $\mathbb{E}[I_{t_p,B_p,A_p,\tau,\epsilon}(X,x)]\leq 1-\bar{\varepsilon}$ if $\bar{\mu}_p\leq 1-1.5\bar{\varepsilon}$, and $\mathbb{E}[I_{t_p,B_p,A_p,\tau,\epsilon}(X,x)]\geq 1-2\bar{\varepsilon}$ if $\bar{\mu}_p>1-1.5\bar{\varepsilon}$.
\end{lemma}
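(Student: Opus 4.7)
The plan is to recognize Lemma \ref{lem0302-1} as a standard concentration statement and prove it by applying Hoeffding's inequality to the sample mean $\bar{\mu}_p$ produced by Algorithm \ref{algo3}. Write $\mu := \mathbb{E}[I_{t_p,B_p,A_p,\tau,\varepsilon}(X,x)]$. By the construction in Algorithm \ref{algo3}, the samples $I_{t_p}(X_1,x_1),\ldots,I_{t_p}(X_{m'},x_{m'})$ are i.i.d.\ Bernoulli random variables with common mean $\mu$, and $\bar{\mu}_p$ is their empirical average. The key observation is that the deviation allowed between the two thresholds $1-1.5\bar{\varepsilon}$ and either $1-\bar{\varepsilon}$ or $1-2\bar{\varepsilon}$ is exactly $\bar{\varepsilon}/2$, so it suffices to bound $\Pr[|\bar{\mu}_p-\mu|\geq \bar{\varepsilon}/2]$.

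First I would apply Hoeffding's inequality for bounded $[0,1]$-valued variables, which gives
\[
\Pr\bigl[|\bar{\mu}_p-\mu|\geq \bar{\varepsilon}/2\bigr]\leq 2\exp\bigl(-2m'(\bar{\varepsilon}/2)^2\bigr)=2\exp(-m'\bar{\varepsilon}^2/2).
\]
Plugging in $m'=8\lceil \log(2/\delta)/\bar{\varepsilon}^2\rceil\geq 8\log(2/\delta)/\bar{\varepsilon}^2$ yields an exponent of at least $4\log(2/\delta)$, so the failure probability is at most $2(\delta/2)^4\leq \delta$. Hence with probability at least $1-\delta$ we have $|\bar{\mu}_p-\mu|<\bar{\varepsilon}/2$.

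Conditioning on this good event, the two implications follow by elementary arithmetic. If $\bar{\mu}_p\leq 1-1.5\bar{\varepsilon}$, then $\mu\leq \bar{\mu}_p+\bar{\varepsilon}/2\leq 1-\bar{\varepsilon}$; and if $\bar{\mu}_p>1-1.5\bar{\varepsilon}$, then $\mu\geq \bar{\mu}_p-\bar{\varepsilon}/2>1-2\bar{\varepsilon}$. This matches the two conclusions of the lemma. The ``convention'' \eqref{eq0711-1} for the case $A\setminus X=\emptyset$ does not cause any difficulty because it just fixes $I_{t_p}(X,x)$ to be a valid $\{0,1\}$-value in that degenerate case, so the samples remain Bernoulli with mean $\mu$.

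There is no real obstacle here — the argument is a one-shot Hoeffding bound, and the only thing to check carefully is that the sample size $m'$ is large enough to bring the failure probability below $\delta$ given the chosen slack $\bar{\varepsilon}/2$. The specific constant $8$ in the definition of $m'$ is chosen precisely to absorb the factor $2$ in front of the exponential and leave room for the square-deviation term in Hoeffding's inequality, so I would just verify this constant once and then state the two implications.
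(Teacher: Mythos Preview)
Your proposal is correct and follows essentially the same approach as the paper: both arguments show $\Pr[|\bar{\mu}_p-\mu|\geq \bar{\varepsilon}/2]\leq \delta$ via a concentration inequality for i.i.d.\ $[0,1]$-valued samples (the paper uses a Chernoff-type bound $2e^{-a^2/(2m'\mu)}$ with $\mu\leq 1$, you use Hoeffding), and then deduce the two implications by the same $\bar{\varepsilon}/2$-slack arithmetic.
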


The proofs of Lemma \ref{lem0303-2} and Lemma \ref{lem0302-1} are given in Appendix A and Appendix B.

\subsection{Performance analysis}\label{subsec2.3}
In this section, we analyze the running time and the approximation ratio of Algorithm \ref{algo1}. We first provide some technical lemmas. The first lemma shows that the expected size of $A_p$ in Algorithm \ref{algo2} decreases exponentially as $p$ grows, which implies that $A_p$ will become empty in at most $\log_{1+\bar\varepsilon}m$ rounds. Note that in line \ref{line14} of Algorithm \ref{algo2}, if $i=\log_{1+\bar{\varepsilon}}m$, then $t_p=|A_p|$, which implies $\bar{\mu}_p=0$ and thus $\bar{\mu}_p\leq 1-1.5\bar{\varepsilon}$. This indicates that line \ref{line181} in Algorithm \ref{algo2} is guaranteed to be triggered. 

\begin{lemma}\label{lem0302}
If line \ref{line181} in Algorithm \ref{algo2} is triggered (the for loop is exited), then $\mathbb{E}[|A_{p+1}|]\leq (1-\varepsilon)|A_p|$ with probability at least $1-\delta$, where $\mathbb{E}[|A_{p+1}|]$ denotes the expected size of $A_{p+1}$ conditioned on a fixed $A_p$.
\end{lemma}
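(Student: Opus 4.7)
The plan is to combine Lemma~\ref{lem0302-1} with a symmetrization argument that transfers the bound on the sampling expectation $\mathbb{E}[I_{t_p}(X,x)]$ into a bound on $\mathbb{E}|A_{p+1}|$. Throughout I fix $A_p,B_p$ and take expectations only over the random $t_p$-set $T_p$ selected at line~\ref{line23}. Because line~\ref{line181} was triggered, we have $\bar\mu_p\leq 1-1.5\bar\varepsilon$, so Lemma~\ref{lem0302-1} gives, with probability at least $1-\delta$ over the samples drawn by Algorithm~\ref{algo3}, $\mathbb{E}[I_{t_p,B_p,A_p,\tau,\epsilon}(X,x)]\leq 1-\bar\varepsilon$. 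I would condition on this event in what follows.

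Next I would pin down which elements can survive into $A_{p+1}$. Since $B_{p+1}=B_p\cup T_p$, every $v\in T_p$ has $g_{B_{p+1}}(v)=0$, so $v\notin A_{p+1}$; hence $A_{p+1}\subseteq A_p\setminus T_p$. For $v\in A_p\setminus T_p$, membership in $A_{p+1}$ forces $g_{B_p\cup T_p}(v)\geq(1-\varepsilon)\tau$, i.e.\ $I_{t_p,B_p,A_p,\tau,\epsilon}(T_p,v)=1$. Linearity of expectation then gives
\[
\mathbb{E}|A_{p+1}|\;\leq\;\sum_{v\in A_p}\Pr\!\bigl[v\notin T_p\text{ and }g_{B_p\cup T_p}(v)\geq(1-\varepsilon)\tau\bigr].
\]
A short exchangeability calculation---verifying that the joint distribution on $(T_p,v)$ obtained by fixing $v\in A_p$ and drawing $T_p$ uniformly over $t_p$-subsets of $A_p\setminus\{v\}$ coincides with the distribution of $(X,x)$ used in Algorithm~\ref{algo3}---collapses the right-hand side to $(|A_p|-t_p)\,\mathbb{E}[I_{t_p}(X,x)]$. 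Combined with the previous step, this yields $\mathbb{E}|A_{p+1}|\leq(|A_p|-t_p)(1-\bar\varepsilon)$.

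The last step is to upgrade this into the target bound $(1-\varepsilon)|A_p|$, and this is where I expect the main technical work. A direct algebraic manipulation shows $(|A_p|-t_p)(1-\bar\varepsilon)\leq(1-\varepsilon)|A_p|$ whenever $t_p\geq\frac{\varepsilon-\bar\varepsilon}{1-\bar\varepsilon}|A_p|$. For the complementary regime of small $t_p$, the plan is to exploit the minimality of $t_p$ in the exponential grid of line~\ref{line15}: for the previous grid point $t_p^-$ the Mean test returned $\bar\mu_p>1-1.5\bar\varepsilon$, so by the second half of Lemma~\ref{lem0302-1} we have $\mathbb{E}[I_{t_p^-}(X,x)]\geq 1-2\bar\varepsilon$, and combining this with the monotonicity of $t\mapsto\mathbb{E}[I_t(X,x)]$ from Lemma~\ref{lem0303-2} allows one to sharpen the estimate on $\mathbb{E}[I_{t_p}(X,x)]$ and to bound $t_p/|A_p|$ from below. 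Making this case split work cleanly requires carefully using the slack between $\bar\varepsilon=\frac{1}{3}(1-\frac{1}{2T\ell})\varepsilon$ and $\varepsilon$ to absorb the $(1+\bar\varepsilon)$-factor rounding error coming from $t_p=\lfloor(1+\bar\varepsilon)^i\rfloor$, and this bookkeeping is the main obstacle.
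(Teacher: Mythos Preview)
Your first three steps are correct and essentially coincide with the paper's argument: invoke Lemma~\ref{lem0302-1} on the event $\bar\mu_p\le 1-1.5\bar\varepsilon$, observe that survivors into $A_{p+1}$ must satisfy the indicator condition, and use exchangeability to collapse the sum to $(|A_p|-t_p)\,\mathbb{E}[I_{t_p}(X,x)]\le (|A_p|-t_p)(1-\bar\varepsilon)\le (1-\bar\varepsilon)|A_p|$. This is exactly what the paper does (its computation of $\mathbb{E}[|A_{p+1}|/|A_p\setminus T_p|]$ is just a slightly different packaging of the same exchangeability identity, using that $|A_p\setminus T_p|$ is the deterministic constant $|A_p|-t_p$).

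Your fourth step is unnecessary. The ``$(1-\varepsilon)$'' in the lemma statement is a typo for ``$(1-\bar\varepsilon)$'': the paper's own proof concludes with $\mathbb{E}[|A_{p+1}|]\le(1-\bar\varepsilon)|A_p|$, and the only downstream consumer (the proof of Lemma~\ref{lem0308-1}) explicitly uses the bound $(1-\bar\varepsilon)^r$, with $r$ chosen as $\log_{1/(1-\bar\varepsilon)}(2mT\ell/\varepsilon)$ precisely so that $(1-\bar\varepsilon)^r|A_1|$ is small. So you should stop after step~3. Your proposed route for step~4 also would not close: knowing $\mathbb{E}[I_{t_p^-}(X,x)]\ge 1-2\bar\varepsilon$ gives no lower bound on $t_p/|A_p|$ (the grid $\lfloor(1+\bar\varepsilon)^i\rfloor$ can exit at $t_p=1$ even when $|A_p|$ is large), and the monotonicity of Lemma~\ref{lem0303-2} only tells you $\mathbb{E}[I_{t_p}]\le\mathbb{E}[I_{t_p^-}]$, which does not sharpen the upper bound $1-\bar\varepsilon$ you already have.
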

\begin{proof}
The inequality is obvious if $A_{p+1}=\emptyset$. In the following, assume $A_{p+1}\neq\emptyset$.

By the assumption of this lemma, we have $\bar{\mu}_p\leq 1-1.5\bar{\varepsilon}$. Then by Lemma \ref{lem0302-1}, with probability at least $1-\delta$,
\begin{equation}\label{eq0302}
\mathbb{E}[I_{t_p,B_p,A_p,\tau,\epsilon}(X,x)]\leq 1-\bar{\varepsilon}.
\end{equation}
Note that once $T_p$ is picked, for any element $x\in A_p$, we move $x$ to $A_{p+1}$ only if $I[g_{B_p\cup T_p}(x)\geq (1-\varepsilon)\tau, g_{B_p\cup T_p}(x)/c(x)\geq (1-\varepsilon)\beta]=1$; also note that $I[g_{B_p\cup T_p}(x)\geq (1-\varepsilon)\tau, g_{B_p\cup T_p}(x)/c(x)\geq (1-\varepsilon)\beta]=0$ if $x\in T_p$. It follows that
\begin{align*}
\mathbb{E}[|A_{p+1}|] & =\sum_{x\in A_p\setminus T_p}I[g_{B_p\cup T_p}(x)\geq (1-\varepsilon)\tau, g_{B_p\cup T_p}(x)/c(x)\geq (1-\varepsilon)\beta]\\ & \leq \sum_{x\in A_p\setminus T_p}I[g_{B_p\cup T_p}(x)\geq (1-\varepsilon)\tau].
\end{align*}
It follows that
\begin{align*}
\mathbb{E}\left[\frac{|A_{p+1}|}{|A_p\setminus T_p|}\right]=\ & \ \sum_{T_p}\Pr[\textrm{$T_p$ is picked}]\mathbb{E}\left[\frac{|A_{p+1}|}{|A_p\setminus T_p|}|T_p\right]\\
\leq\ & \sum_{T_p}\Pr[\textrm{$T_p$ is picked}]\left(\sum_{x\in A_p\setminus T_p}\Pr[\textrm{$x$ is picked}|T_p]\frac{I[g_{B_p\cup T_p}(x)\geq (1-\varepsilon)\tau]}{|A_p\setminus T_p|}\right)\\
=\ & \ \sum\limits_{T_p,x\in A_p\setminus T_p}\Pr[\textrm{$T_p,x$ are picked}]\frac{I[g_{B_p\cup T_p}(x)\geq (1-\varepsilon)\tau]}{|A_p\setminus T_p|}\\
\leq \ & \ \sum\limits_{T_p,x\in A_p\setminus T_p}\Pr[\textrm{$T_p,x$ are picked}]I[g_{B_p\cup T_p}(x)\geq (1-\varepsilon)\tau]\\ \nonumber
=\ & \ \mathbb{E}[I_{t_p,B_p,A_p,\tau,\epsilon}(T,x)],
\end{align*}
where the second inequality uses the observation that $|A_p\setminus T_p|\geq 1$ (since $A_{p+1}\neq\emptyset$). Combining this with inequality \eqref{eq0302}, we have $\mathbb{E}\left[\frac{|A_{p+1}|}{|A_p\setminus T_p|}\right]\leq 1-\bar{\varepsilon}$. Thus $\mathbb{E}[|A_{p+1}|]\leq (1-\varepsilon)\mathbb{E}[|A_p\setminus T_p|]\leq (1-\bar{\varepsilon})|A_p|$. The lemma is proved.
\end{proof}

For ease of presentation, we call $A_t^{t'}$ (line \ref{line6} of Algorithm \ref{algo2}) as a {\em subordinate bucket} and  $A_{t}=\{v\in V
\colon (1-\varepsilon)^t\beta \leq g_{B_{t}^{t'}}(v)/c(v)\leq (1-\varepsilon)^{t-1}\beta\}$ as a {\em primary bucket}. The following lemma says that for any $t\leq T$ and $t'\leq \ell$, when line \ref{line0704-1} of Algorithm \ref{algo1} returns a set $J_{t}^{t'}$, the subordinate bucket $A_{t}^{t'}$ becomes empty with probability at least $1-\varepsilon/(T^2\ell)$.
\begin{lemma}\label{lem0308-1}
When  Algorithm \ref{algo2} reaches line \ref{line30}, the subordinate bucket $A_p$ (line \ref{line6}) becomes empty with probability at least $1-\varepsilon/(T^2\ell)$.
\end{lemma}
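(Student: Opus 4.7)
The plan is to iterate the one-step shrinkage estimate of Lemma \ref{lem0302} across the $r$ rounds of the while loop in Algorithm \ref{algo2}, chain the expected sizes, and close with Markov's inequality. Before invoking Lemma \ref{lem0302} I would first check that its hypothesis is always met: as observed just before the lemma, once the inner for-loop index reaches $i=\log_{1+\bar\varepsilon}m$ we have $t_p=|A_p|$, which forces $\bar\mu_p=0\leq 1-1.5\bar\varepsilon$, so line \ref{line181} is always eventually triggered and the $t_p$ actually used at line \ref{line23} satisfies the premise of Lemma \ref{lem0302}.

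With this in place, for each $p\in\{1,\dots,r\}$ let $E_p$ denote the event that the conclusion of Lemma \ref{lem0302} holds, i.e.\ $\mathbb{E}[|A_{p+1}|\mid A_p]\leq (1-\bar\varepsilon)|A_p|$; then $\Pr[E_p]\geq 1-\delta$, and a union bound over the $r$ iterations gives $\Pr[E]\geq 1-r\delta$ for $E=\bigcap_{p}E_p$. Conditioned on $E$, iterating the conditional bound and taking total expectation yields $\mathbb{E}[|A_{r+1}|\mid E]\leq (1-\bar\varepsilon)^r|A_1|\leq (1-\bar\varepsilon)^r m$, so Markov's inequality gives $\Pr[|A_{r+1}|\geq 1\mid E]\leq (1-\bar\varepsilon)^r m$.

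The final step is parameter substitution. With $r=\log_{1/(1-\bar\varepsilon)}(2mT\ell)/\varepsilon$ one has $(1-\bar\varepsilon)^r m = m\cdot(2mT\ell)^{-1/\varepsilon}$, and because $\varepsilon<1/5$ this is far below $\varepsilon/(2T^2\ell)$; with $\delta=\varepsilon/(2rkT^2\ell)$ the union-bound cost is $r\delta=\varepsilon/(2kT^2\ell)$. Adding these, $\Pr[|A_{r+1}|\geq 1]\leq r\delta+(1-\bar\varepsilon)^r m\leq \varepsilon/(T^2\ell)$, which is exactly the claimed bound.

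The main obstacle I anticipate is the careful handling of conditioning: Lemma \ref{lem0302}'s shrinkage bound holds only with probability $1-\delta$ in each round, yet I need to propagate an expectation across all $r$ rounds. Isolating this via the global good event $E$ cleanly separates the probabilistic failure of the \textsc{Mean} subroutine from the deterministic geometric decay of $|A_p|$. A secondary caveat is that Algorithm \ref{algo2} can also break out of the while loop early at line \ref{line25} because $g(B_{p+1})\geq k$; the statement of the lemma is meaningful exactly in the complementary case, where the while loop is not terminated by feasibility, and I would make this qualification explicit when setting up the argument.
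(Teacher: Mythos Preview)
Your proposal is correct and follows essentially the same route as the paper's proof: union-bound the per-round shrinkage event of Lemma~\ref{lem0302} over the $r$ iterations, chain the resulting geometric decay of $\mathbb{E}[|A_p|]$, and finish with Markov's inequality, then plug in $r=\log_{1/(1-\bar\varepsilon)}(2mT\ell)/\varepsilon$ and $\delta=\varepsilon/(2rkT^2\ell)$ to get the $\varepsilon/(T^2\ell)$ failure bound. Your explicit observation that line~\ref{line181} is always triggered (so Lemma~\ref{lem0302} applies in every round) and your remark about the early exit at line~\ref{line25} are both useful clarifications that the paper leaves implicit.
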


%

The following corollary shows that when the inner while loop of Algorithm \ref{algo1} halts, the primary bucket $A_t$ becomes empty with probability at least $1-\varepsilon/T^2$.

\begin{corollary}\label{cor0308}
After $J_{t}^{\ell}$ is computed (line \ref{line0704-1} of Algorithm \ref{algo1}), the primary bucket $A_t$ becomes empty with probability at least $1-\varepsilon/T^2$.
\end{corollary}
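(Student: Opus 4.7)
The plan is to combine Lemma~\ref{lem0308-1} with a union bound and a short descending-index cascade over the marginal-profit bands. For each inner iteration $t'\in\{1,\ldots,\ell\}$, Lemma~\ref{lem0308-1} applied to the call NIS$(A_t^{t'},B_t^{t'},\ldots)$ ensures that the subordinate bucket $A_t^{t'}$ is empty with respect to $B_t^{t'+1}$ with probability at least $1-\varepsilon/(T^2\ell)$. A union bound over the $\ell$ bands then gives that all of these emptiness events hold simultaneously with probability at least $1-\ell\cdot\varepsilon/(T^2\ell)=1-\varepsilon/T^2$, which is exactly the target probability.

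Conditioned on this event, I would argue by contradiction that the primary bucket $A_t$ with respect to $B_t^{\ell+1}$ is empty. Suppose some $v$ satisfies $(1-\varepsilon)^t\beta\leq g_{B_t^{\ell+1}}(v)/c(v)\leq (1-\varepsilon)^{t-1}\beta$. Then $g_{B_t^{\ell+1}}(v)$ is a positive integer bounded by $\tau$; since $\tau\leq k$ (because $g\leq k$), we have $(1-\varepsilon)^{\ell}\tau=\tau/k\leq 1\leq g_{B_t^{\ell+1}}(v)\leq\tau$. Hence $g_{B_t^{\ell+1}}(v)$ lies in some band $[(1-\varepsilon)^{t^\ast}\tau,(1-\varepsilon)^{t^\ast-1}\tau]$ for an index $t^\ast\in\{1,\ldots,\ell\}$, i.e., $v\in A_t^{t^\ast}$ with respect to $B_t^{\ell+1}$.

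Now I run the cascade. Since $A_t^{t^\ast}$ with respect to $B_t^{t^\ast+1}$ is empty by the conditioned event, $v$ cannot lie in it. Because $B_t^{t^\ast+1}\subseteq B_t^{\ell+1}$, submodularity lifts the lower bounds on both the ratio and the marginal profit from $B_t^{\ell+1}$ back to $B_t^{t^\ast+1}$; the upper bound on the ratio at $B_t^{t^\ast+1}$ is secured by the outer invariant that at $B_t^1$ no element has ratio exceeding $(1-\varepsilon)^{t-1}\beta$ (itself proved by induction on outer iterations, invoking this very corollary at $t-1$) combined with monotonicity of the ratio under growing $B$. So the only way for $v\notin A_t^{t^\ast}$ at $B_t^{t^\ast+1}$ is that the marginal-profit upper bound fails, giving $g_{B_t^{t^\ast+1}}(v)>(1-\varepsilon)^{t^\ast-1}\tau$. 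Because $g_{B_t^{t^\ast+1}}(v)\leq\tau$, this places $v$ in a strictly lower band $A_t^{t^{\ast\ast}}$ at $B_t^{t^\ast+1}$ with $t^{\ast\ast}<t^\ast$. Iterating the same argument using the emptiness of $A_t^{t^{\ast\ast}}$ at $B_t^{t^{\ast\ast}+1}$ yields $g_{B_t^{t^{\ast\ast}+1}}(v)>(1-\varepsilon)^{t^{\ast\ast}-1}\tau$, and so on. The resulting strictly decreasing sequence $t^\ast>t^{\ast\ast}>\cdots$ of positive integers must terminate; at the terminal step one would need $g_{B_t^{2}}(v)>(1-\varepsilon)^0\tau=\tau$, contradicting $g_B(v)\leq g(v)\leq\tau$.

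The main obstacle I expect is the cascade bookkeeping: one has to verify at each back-step that the ratio stays in the primary range (via the outer invariant and submodularity) so that the emptiness event can only fail through the marginal-profit upper bound, thereby forcing each new band index to be strictly smaller. Once this is in place, strict monotonicity of the index chain together with the global bound $g\leq\tau$ closes the argument.
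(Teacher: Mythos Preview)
Your union bound over the $\ell$ subordinate-emptiness events, with probability $1-\ell\cdot\varepsilon/(T^2\ell)=1-\varepsilon/T^2$, matches the paper exactly; the paper's proof stops there, simply writing $\Pr(A_t=\emptyset)=\Pr(C_t^1\cap\cdots\cap C_t^\ell)$ and applying the union bound, without any cascade argument. So on the part the paper actually writes down, you agree.

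Where your proposal differs is that you try to justify the implication ``all $A_t^{t'}$ empty $\Rightarrow$ $A_t$ empty'' that the paper tacitly assumes. Your descending-band cascade is the natural way to do this, and the profit side of the cascade is fine. The problem is the step you flag yourself: to force the escape at each back-step to be through the \emph{profit} upper bound (so the band index strictly drops), you need the \emph{ratio} upper bound $g_{B_t^{t^\ast+1}}(v)/c(v)\le(1-\varepsilon)^{t-1}\beta$ to hold at every intermediate $B_t^{t^\ast+1}$. You obtain this from an ``outer invariant'' at $B_t^1$, proved by induction on $t$ via the corollary at $t-1$. But that outer invariant is not free: to know no element has ratio above $(1-\varepsilon)^{t-1}\beta$ at $B_t^1$ you need \emph{all} earlier primary buckets $A_1,\ldots,A_{t-1}$ to have emptied, each costing $\varepsilon/T^2$. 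Unrolling the induction therefore yields $A_t=\emptyset$ with probability at least $1-t\varepsilon/T^2$, not the $1-\varepsilon/T^2$ the corollary states. The corollary's bound, as stated for a single $t$, is not recovered by your argument.

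This degradation is harmless downstream---the only uses of the corollary (e.g., in the proof of \eqref{eq0802-3}) immediately take a union bound over all $t\le T$ anyway, landing at $1-\varepsilon/T$, which your $1-t\varepsilon/T^2$ bound also gives. But as a proof of the corollary's exact statement, the circular appeal to the outer invariant leaves a gap you have not closed.
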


The proofs of Lemma \ref{lem0308-1} and Corollary \ref{cor0308} are given in Appendix C and Appendix D.

The next lemma shows that with probability at least $1-\delta kr$, the set $J_{t}^{t'}$ computed in line \ref{line0704-1} of Algorithm \ref{algo1} satisfies the nearly independent property defined in \eqref{eq0709-1}.

\begin{lemma}\label{lem0303}
$\mathbb{E}[g_{B_{t}^{t'}}(J_{t}^{t'})]\geq (1-\varepsilon)^2\sum_{v\in J_{t}^{t'}}g_{B_{t}^{t'}}(v)$ with probability at least $1-\delta kr$.
\end{lemma}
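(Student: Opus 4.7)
The plan is to condition on a high-probability accuracy event for the Mean subroutine and then deduce the claimed inequality round by round. Let $\mathcal{E}$ denote the event that every invocation of Mean inside this call to NIS returns an estimate obeying the dichotomy of Lemma~\ref{lem0302-1}. Since each Mean call is accurate with probability at least $1-\delta$, a union bound over all Mean calls (dominated by $kr$ in the paper's parameter accounting) yields $\Pr[\mathcal{E}] \geq 1-\delta k r$. I will establish the claim in the conditional-expectation form $\mathbb{E}[g_{B_t^{t'}}(J_t^{t'}) \mid \mathcal{E}] \geq (1-\varepsilon)^2\, \mathbb{E}[\sum_{v \in J_t^{t'}} g_{B_t^{t'}}(v) \mid \mathcal{E}]$, which is the natural reading of the lemma.

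Working under $\mathcal{E}$, I would fix a round $p$ of the while loop. Let $t_p = \lfloor(1+\bar\varepsilon)^{i_p}\rfloor$ be the guess at which the for loop breaks (so $\bar\mu_p \leq 1-1.5\bar\varepsilon$) and $t'_p = \lfloor(1+\bar\varepsilon)^{i_p-1}\rfloor$ the preceding guess. Since the for loop did not break at $t'_p$, we have $\bar\mu > 1-1.5\bar\varepsilon$ there, so Lemma~\ref{lem0302-1} gives $\mathbb{E}[I_{t'_p}(X,x)] \geq 1-2\bar\varepsilon$, and monotonicity (Lemma~\ref{lem0303-2}) extends this to $\mathbb{E}[I_s(X,x)] \geq 1-2\bar\varepsilon$ for every $s \leq t'_p$. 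Expanding $g_{B_p}(T_p)$ along a uniformly random ordering $x_1,\ldots,x_{t_p}$ of $T_p$ makes $X_{i-1}$ a uniform $(i-1)$-subset of $A_p$ and $x_i$ a uniform element of $A_p\setminus X_{i-1}$, so $\mathbb{E}[g_{B_p\cup X_{i-1}}(x_i)] \geq (1-\varepsilon)\hat\tau\,\mathbb{E}[I_{i-1}(X_{i-1},x_i)]$, where $\hat\tau$ is the value of $\tau$ passed to NIS. Keeping only the first $t'_p+1$ (nonnegative) terms and using $(t'_p+1)/t_p \geq 1/(1+\bar\varepsilon) \geq 1-\bar\varepsilon$ yields
\[
\mathbb{E}[g_{B_p}(T_p)] \;\geq\; (1-\varepsilon)(1-2\bar\varepsilon)(t'_p+1)\hat\tau \;\geq\; (1-\varepsilon)(1-2\bar\varepsilon)(1-\bar\varepsilon)\, t_p\hat\tau \;\geq\; (1-\varepsilon)^2\, t_p\hat\tau,
\]
where the final step uses $\bar\varepsilon \leq \varepsilon/3$ so that $(1-2\bar\varepsilon)(1-\bar\varepsilon) \geq 1-3\bar\varepsilon \geq 1-\varepsilon$.

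Because $T_p \subseteq A_t^{t'}$ and every $v \in A_t^{t'}$ obeys $g_{B_t^{t'}}(v) \leq \hat\tau$ by construction of the bucket, the deterministic bound $\sum_{v \in T_p} g_{B_t^{t'}}(v) \leq t_p\hat\tau$ holds; combining this with the previous display gives $\mathbb{E}[g_{B_p}(T_p) \mid \mathcal{F}_p] \geq (1-\varepsilon)^2\, \mathbb{E}[\sum_{v \in T_p} g_{B_t^{t'}}(v) \mid \mathcal{F}_p]$, where $\mathcal{F}_p$ is the filtration generated by $T_1,\ldots,T_{p-1}$. The telescoping identities $g_{B_t^{t'}}(J_t^{t'}) = \sum_p g_{B_p}(T_p)$ (submodular chain rule) and $\sum_{v \in J_t^{t'}} g_{B_t^{t'}}(v) = \sum_p \sum_{v \in T_p} g_{B_t^{t'}}(v)$ (disjointness of the $T_p$), combined with the tower property, then deliver the claim.

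The main obstacle is the bookkeeping around conditional expectations through the telescoping: the per-round inequality must be combined while respecting the filtration, and several boundary cases need checking\textemdash most notably $i_p = 0$ with $t'_p = 0$, where $\mathbb{E}[I_0]=1$ holds because every $v\in A_p$ already satisfies $g_{B_p}(v)\geq(1-\varepsilon)\hat\tau$, and the early-termination case in which the while loop exits via $g(B_{p+1})\geq k$ rather than via $A_p$ emptying. A secondary subtlety is interpreting the lemma's asymmetric statement: since $J_t^{t'}$ is itself random, the natural reading\textemdash and the one my proof produces\textemdash is the conditional-expectation form stated in the first paragraph.
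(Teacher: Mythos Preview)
Your proof follows essentially the same route as the paper's. Both arguments (i) use the penultimate guess $t'_p$ (the paper writes $\bar t=t^*/(1+\bar\varepsilon)$) together with Lemma~\ref{lem0302-1} to secure $\mathbb{E}[I_{t'_p}]\ge 1-2\bar\varepsilon$, (ii) invoke the monotonicity of Lemma~\ref{lem0303-2} to propagate this to all smaller sizes, (iii) convert the indicator lower bound into a marginal-profit lower bound via Markov's inequality, (iv) sum along a random ordering of $T_p$ keeping only the first $\approx t'_p$ terms, and (v) close with the bucket inequality $g_{B_1}(v)\le\hat\tau$ and telescope over~$p$. Your handling of the ratio $(t'_p+1)/t_p\ge 1/(1+\bar\varepsilon)$ is the same computation the paper does with $\bar t/t^*$, and the constant bookkeeping $(1-2\bar\varepsilon)(1-\bar\varepsilon)\ge 1-\varepsilon$ matches the paper's ``by the choice of~$\bar\varepsilon$''.

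Where you are slightly more careful than the paper: you make the filtration $\mathcal F_p$ and the accuracy event $\mathcal E$ explicit, and you note that the lemma's statement really means a conditional-expectation inequality; the paper glosses over this with ``by the union bound, and a proof similar to the proof of Corollary~\ref{cor0308}''. You also flag the boundary cases ($i_p=0$, early exit via $g(B_{p+1})\ge k$) that the paper handles only by the one-line remark that \eqref{eq0303-1} is ``obviously true if $t^*=0$ or $1$''. One small wording issue: your sentence ``a union bound over all Mean calls (dominated by $kr$)'' overstates what is needed\textemdash you actually only require accuracy of the single Mean call at the penultimate guess in each of the at most $r$ rounds, which already gives failure probability $\le r\delta\le kr\delta$; the total number of Mean calls can exceed $kr$. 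This does not affect correctness.
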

\begin{proof}
Consider the case when the input $B$ of Algorithm \ref{algo2} is $B_{t}^{t'}$. That is, $B_{1}=B_{t}^{t'}$ (line \ref{line5} of Algorithm \ref{algo2}). We first prove that for any round $p$, with probability at least $1-n\delta$, the random set $T_p$ (line \ref{line23} of Algorithm \ref{algo2}) satisfies
\begin{equation}\label{eq0303-1}
\mathbb{E}[g_{B_p}(T_p)]\geq (1-\varepsilon)^2\sum_{v\in T_p}g_{B_1}(v),
\end{equation}
where $B_p$ is computed in line \ref{line24} of Algorithm \ref{algo2}. We consider a fixed round $p$ in the rest of this proof. For ease of presentation, denote the size of $T_p$ as $t^*$. Inequality \eqref{eq0303-1} is obviously true if $t^*=0$ or $1$. Next, suppose $t^*\geq 2$. Note that line \ref{line23} of Algorithm \ref{algo2} is executed after we jumped out of the for loop. Further note that this jump out is always due to line \ref{line18}. In fact, if the number of iterations the for loop takes has reached $\log_{1+\bar{\varepsilon}}m$, then $t_p=|A_p|$, and every $X_i$ in Algorithm \ref{algo3} is $A_p$, resulting in $\bar{\mu}_p=0$ (see \eqref{eq0711-1}), in which case the condition of line \ref{line18} is satisfied. In the previous round of the for loop, that is, when $t_p$ tries the value $\bar{t}=t^*/(1+\bar\varepsilon)$, we must have $\bar{\mu}_p>1-1.5\bar{\varepsilon}$, and thus
\begin{equation}\label{eq0709-2}
\mathbb{E}[I_{\bar{t},B_p,A_p,\tau,\epsilon}(X,x)]\geq 1-2\bar{\varepsilon}
\end{equation}
by Lemma \ref{lem0302-1}. Assume that $T_p=\{v_1,\ldots,v_{t^*}\}$, and for any $i\leq t^*$, denote $T_{p}^{i}=\{v_1,\ldots,v_i\}$. By the monotonicity of $g$, we have
\begin{equation}\label{eq0303-2}
\mathbb{E}[g_{B_p}(T_p)]\geq \mathbb{E}[g_{B_p}(T_p^{\bar{t}})]=\sum_{i=1}^{\bar{t}}\mathbb{E}[g_{B_p\cup T_{p}^{i-1}}(v_i)].
\end{equation}
By the definition of $I_{i,B_p,A_p,\tau,\epsilon}(X,x)$ and Markov's inequality,
\begin{equation}\label{eq0303-3}
\mathbb{E}[I_{i,B_p,A_p,\tau,\epsilon}(T_{p}^{i},v_{i+1})]=\Pr[g_{B_p\cup T_{p}^{i}}(v_{i+1})\geq (1-\varepsilon)\tau]\leq \frac{\mathbb{E}[g_{B_p\cup T_{p}^{i}}(v_{i+1})]}{(1-\varepsilon)\tau}.
\end{equation}
Combining inequalities \eqref{eq0303-2} and \eqref{eq0303-3}, we have
\begin{equation}\label{eq0308}
\mathbb{E}[g_{B_p}(T_p)]\geq (1-\varepsilon)\tau\cdot\sum_{i=1}^{{\bar{t}}}\mathbb{E}[I_{i,B_p,A_p,\tau,\epsilon}(T_{p}^{i},v_{i+1})].
\end{equation}
For any $i\leq {\bar{t}}$, by Lemma \ref{lem0303-2} and inequality \eqref{eq0709-2}, with probability at least $1-\delta$,
\begin{equation}\label{eq0308-1}
\mathbb{E}[I_{i,B_p,A_p,\tau,\epsilon}(T_{p}^{i},v_{i+1})]\geq 1-2\bar{\varepsilon}.
\end{equation}
Combining inequalities \eqref{eq0308}, \eqref{eq0308-1}, and the union bound, with probability at least $1-n\delta$,
\begin{align}\label{eq0303-4}
\mathbb{E}[g_{B_p}(T_p)]\geq \ & \ (1-2\bar{\varepsilon}){\bar{t}}(1-\varepsilon)\tau\\ \nonumber
= \ & \ \frac{t^*}{1+\bar{\varepsilon}}(1-2\bar{\varepsilon})(1-\varepsilon)\tau\\ \nonumber\geq \ & \ (1-\varepsilon)^2t^*\tau,
\end{align}
where the last inequality is due to  the choice of $\bar{\varepsilon}$. According to line \ref{line11} and line \ref{line0704-1} of Algorithm \ref{algo1}, when Algorithm \ref{algo2} is triggered, we have $g_{B_{1}}(v)\leq \tau$ for any $v\in A$, with respect to the input parameter $\tau$. It follows that $g_{B_{1}}(v)\leq \tau$ holds for every $v\in T_p$. Combining this with \eqref{eq0303-4}, inequality \eqref{eq0303-1} is proved.

Then, by the union bound, and a proof similar to the proof of Corollary \ref{cor0308}, with probability at least $1-\delta kr$,
\begin{equation}\label{eq0706}
\sum_{p=1}^{r}\mathbb{E}[g_{B_p}(T_p)]\geq (1-\varepsilon)^2\sum_{i=p}^{r}\sum_{v\in T_p}g_{B_{1}}(v)
\end{equation}
Combining this with $J_{t}^{t'}=\bigcup_{p=1}^{r}T_p$, with probability at least $1-\delta kr$,
\begin{align*}
\mathbb{E}[g_{B_{t}^{t'}}(J_{t}^{t'})]= \ & \ \sum_{p=1}^{r}\mathbb{E}[g_{B_p}(T_p)]\\ \nonumber \geq \ & \ (1-\varepsilon)^2\sum_{p=1}^{r}\sum_{v\in T_p}g_{B_{1}}(v)\\ \nonumber= \ & \ (1-\varepsilon)^2\sum_{v\in J_{t}^{t'}}g_{B_{t}^{t'}}(v),
\end{align*}
where the last inequality is due to $B_1=B_{t}^{t'}$ and $J_{t}^{t'}=\bigcup_{p=1}^{r}T_p$.
\end{proof}

Without loss of generality, we assume that every inner while loop of Algorithm \ref{algo1} is executed $\ell$ times. Denote by $D_t=J_{t}^{1}\cup J_{t}^{2}\cup \ldots\cup J_{t}^{\ell}$ for any $t\leq T$. The following corollary shows that the expected cost effectiveness of $D_t$ decreases geometrically as $t$ grows.

\begin{corollary}\label{cor0303}
For any $t\leq T$, $\frac{\mathbb{E}[g_{B_{t}^1}(D_t)]}{c(D_t)}\geq (1-\varepsilon)^{t+2}\beta$ with probability at least $1-\delta kr\ell$.
\end{corollary}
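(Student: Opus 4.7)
The plan is to combine Lemma \ref{lem0303} (applied at each nested level $t' = 1, \ldots, \ell$) with the defining profit-to-cost lower bound of the primary bucket, and then chain the marginals via the telescoping identity implicit in $B_{t}^{t'+1} = B_{t}^{t'} \cup J_{t}^{t'}$.

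First I would fix $t \leq T$ and apply Lemma \ref{lem0303} separately for each $t' = 1, 2, \ldots, \ell$. Each application gives, with probability at least $1-\delta k r$,
\[
\mathbb{E}[g_{B_{t}^{t'}}(J_{t}^{t'})] \;\geq\; (1-\varepsilon)^2 \sum_{v \in J_{t}^{t'}} g_{B_{t}^{t'}}(v).
\]
A union bound over the $\ell$ values of $t'$ yields that all of these inequalities hold simultaneously with probability at least $1-\delta k r \ell$, which will provide the claimed probability guarantee at the end.

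Next I would use the construction of the primary bucket: by line \ref{line11} of Algorithm \ref{algo1}, every $v \in A_{t}^{t'} \supseteq J_{t}^{t'}$ satisfies $g_{B_{t}^{t'}}(v)/c(v) \geq (1-\varepsilon)^{t}\beta$, so
\[
\sum_{v \in J_{t}^{t'}} g_{B_{t}^{t'}}(v) \;\geq\; (1-\varepsilon)^{t}\beta \cdot c(J_{t}^{t'}).
\]
Combining this with the previous display gives $\mathbb{E}[g_{B_{t}^{t'}}(J_{t}^{t'})] \geq (1-\varepsilon)^{t+2}\beta \cdot c(J_{t}^{t'})$ for every $t'$.

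Finally, I would telescope. Since $B_{t}^{t'+1} = B_{t}^{t'} \cup J_{t}^{t'}$ and $D_t = J_{t}^{1} \cup \cdots \cup J_{t}^{\ell}$, the definition of marginal profit gives the exact identity
\[
g_{B_{t}^{1}}(D_t) \;=\; \sum_{t'=1}^{\ell} g_{B_{t}^{t'}}(J_{t}^{t'}).
\]
Taking expectations, summing the per-$t'$ lower bound, and using $c(D_t) = \sum_{t'=1}^{\ell} c(J_{t}^{t'})$ (assuming the $J_{t}^{t'}$ are disjoint, which follows from the bucket definition since $J_{t}^{t'} \subseteq A_{t}^{t'}$ consists of elements not in $B_{t}^{t'} \supseteq J_{t}^{t''}$ for $t'' < t'$), I obtain
\[
\mathbb{E}[g_{B_{t}^{1}}(D_t)] \;\geq\; (1-\varepsilon)^{t+2}\beta \cdot c(D_t),
\]
which is the desired bound. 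The probability comes from the union bound argued in the first step.

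The only real subtlety I expect is bookkeeping: the randomness in each $J_{t}^{t'}$ is built on top of the realization of the previous ones, so Lemma \ref{lem0303} must be read as a conditional statement given the history, and the union bound then applied in that conditional form. Apart from that, all steps are immediate consequences of the stated lemma and the explicit definitions in Algorithm \ref{algo1}.
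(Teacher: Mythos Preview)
Your proposal is correct and mirrors the paper's own proof almost exactly: the paper also applies Lemma~\ref{lem0303} for each $t'\le\ell$ together with a union bound, invokes the bucket lower bound $g_{B_t^{t'}}(v)/c(v)\ge(1-\varepsilon)^t\beta$ from line~\ref{line11}, and then uses the telescoping identity $\mathbb{E}[g_{B_t^1}(D_t)]=\sum_{t'}\mathbb{E}[g_{B_t^{t'}}(J_t^{t'})]$ together with $c(D_t)=\sum_{t'}\sum_{v\in J_t^{t'}}c(v)$ to conclude. Your remark about the conditional nature of the randomness is apt and is left implicit in the paper.
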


The proof of Corollary \ref{cor0303} is presented in Appendix E.

Now, we are ready to analyze the expected performance of Algorithm \ref{algo1}.

\begin{theorem}\label{thm0309}
For any constant $0<\varepsilon<1/4$, with probability at least $1-3\varepsilon$, Algorithm \ref{algo1} outputs an $\frac{H(\min\{\Delta,k\})}{1-4\varepsilon}$-approximate solution to the MinSMC, where $\Delta=\max_{v\in V}f(v)$. It takes at most  $O(\frac{T\log k(\log m+\log (T\log k))}{\varepsilon^3})$ rounds.
\end{theorem}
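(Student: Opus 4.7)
The plan is to split Theorem \ref{thm0309} into the approximation guarantee (with its probability bound) and the round count, and to treat each separately. The approximation argument mimics Wolsey's \citep{Wolsey} charging analysis, with each batch $D_t = \bigcup_{t'=1}^{\ell} J_t^{t'}$ playing the role of a single greedy element. First, conditional on Corollary \ref{cor0308} succeeding throughout, at the start of the $t$-th outer iteration every $v \in V \setminus B_t^1$ satisfies $g_{B_t^1}(v)/c(v) < (1-\varepsilon)^{t-1}\beta$, because the primary buckets $A_1, \ldots, A_{t-1}$ have already been emptied and marginal ratios only decrease under submodularity. Combining this with the standard submodular-greedy inequality $(k - g(B_t^1))/c(OPT) \leq \max_{v \in V \setminus B_t^1} g_{B_t^1}(v)/c(v)$ (obtained from $g_{B_t^1}(OPT) \geq k - g(B_t^1)$, submodularity, and averaging) yields $(k - g(B_t^1))/c(OPT) \leq (1-\varepsilon)^{t-1}\beta$. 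Pairing this with Corollary \ref{cor0303}, rewritten as $c(D_t) \leq \mathbb{E}[g_{B_t^1}(D_t)]/((1-\varepsilon)^{t+2}\beta)$, produces the per-batch bound $c(D_t) \leq c(OPT)\,\mathbb{E}[g_{B_t^1}(D_t)]/\bigl((1-\varepsilon)^3 (k - g(B_t^1))\bigr)$.

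Summing over $t$ and applying the harmonic telescoping inequality $\sum_{t} (g(B_{t+1}^1) - g(B_t^1))/(k - g(B_t^1)) \leq H(\min\{\Delta, k\})$, valid because $g$ is integer-valued with $\max_v g(v) = \min\{\Delta, k\}$, then invoking Markov's inequality to convert the resulting expectation bound into a high-probability bound, yields $c(B) \leq H(\min\{\Delta,k\})\, c(OPT)/(1-4\varepsilon)$. The total failure probability is kept below $3\varepsilon$ by union bounding: at most $\varepsilon$ across the $T$ applications of Corollary \ref{cor0308}, at most $\varepsilon/2$ across the $T$ applications of Corollary \ref{cor0303} (owing to the choice $\delta = \varepsilon/(2rkT^2\ell)$), and an additional $\varepsilon$ absorbed by Markov, which simultaneously accounts for the slack from $(1-\varepsilon)^{-3}$ to $(1-4\varepsilon)^{-1}$.

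For the round complexity, I would count as follows: the outer while loop runs at most $T$ times; the inner while loop runs at most $\ell = O(\log k / \varepsilon)$ times; each call to Algorithm \ref{algo2} performs at most $r = O((\log m + \log(T \log k))/\varepsilon^2)$ iterations of its while loop; and within each such iteration the $O(\log m)$ values of $i$ in the for loop can be tried in parallel via one invocation of Algorithm \ref{algo3}, which itself consumes only a single adaptive round. Multiplying yields $O(T\,\ell\, r) = O(T \log k (\log m + \log(T \log k))/\varepsilon^3)$ adaptive rounds in total.

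The main technical obstacle, I expect, is the clean handling of the mixed expected-profit/realized-cost expression in Corollary \ref{cor0303}: one must carefully condition on the algorithm's history up to iteration $t$, chain the per-iteration bounds on $\mathbb{E}[g_{B_t^1}(D_t) \mid B_t^1]$ without inflating the failure probability beyond $3\varepsilon$, and route enough of the probability budget through Markov's inequality to tighten the denominator from $(1-\varepsilon)^3$ to $1-4\varepsilon$. A secondary subtlety is verifying that the bucket-exhaustion event from Corollary \ref{cor0308} is robust enough to support the inductive claim that all remaining elements have ratio strictly below $(1-\varepsilon)^{t-1}\beta$ at every outer iteration, which is what drives the greedy comparison to $OPT$.
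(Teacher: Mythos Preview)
Your round-count argument is correct and matches the paper's.

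The approximation argument, however, has a real gap: the ``harmonic telescoping'' you invoke,
\[
\sum_{t}\frac{g(B_{t+1}^{1})-g(B_{t}^{1})}{k-g(B_{t}^{1})}\ \le\ H(\min\{\Delta,k\}),
\]
is false as stated; that sum is only bounded by $H(k)$. The justification ``valid because $\max_v g(v)=\min\{\Delta,k\}$'' is a non-sequitur: the denominators run over the \emph{aggregate} residuals $k-g(B_t^1)$, which range from $k$ down to $1$, and the per-element bound on $g(v)$ plays no role. A simple instance with $k$ large and each batch increasing $g$ by $1$ already forces the sum to equal $H(k)\gg H(\Delta)$. Since the whole point of the theorem (and the paper's motivation over \citep{Fahrbach}) is to beat $H(k)$ down to $H(\min\{\Delta,k\})$, this is not a cosmetic loss.

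To recover $H(\min\{\Delta,k\})$ you need Wolsey's \emph{dual} charging, which is exactly what the paper does: for each $v\in A^*$ it tracks $z_{v,i}=\mathbb{E}[g_{B'_{i-1}}(v)]$ and bounds the charge to $v$ by $\frac{c(v)}{(1-\varepsilon)^4}\sum_i\frac{z_{v,i}-z_{v,i+1}}{z_{v,i}}\le \frac{c(v)}{(1-\varepsilon)^4}H(z_{v,1})$, and \emph{here} $z_{v,1}=g(v)\le\min\{\Delta,k\}$ finally enters. For this to work one needs (i) a per-batch comparison of $r_i/c(D'_i)$ to the single-element ratio $g_{B'_{i-1}}(v)/c(v)$ rather than to the global residual $(k-g(B'_{i-1}))/c(A^*)$, and (ii) monotonicity of the batch cost-effectiveness $c(D'_i)/r_i$ so that the Abel-summation in the ``$c(B'_p)\le\sum_v w(v)$'' step goes through. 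Point (ii) does \emph{not} hold for the raw $D_t$ sequence, which is why the paper first regroups the $D_t$'s into a new sequence $D'_1,\ldots,D'_p$ with non-increasing expected cost-effectiveness (its Claim~1); this regrouping is a genuine extra ingredient that your outline omits. A secondary issue---your conflation of $\mathbb{E}[g_{B_t^1}(D_t)]$ with the realized increment $g(B_{t+1}^1)-g(B_t^1)$ in the telescoping---would also need repair, but it is moot until the $H(k)$-versus-$H(\min\{\Delta,k\})$ problem is addressed.
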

\begin{proof}
We first analyze the running time of  Algorithm \ref{algo1}.  The two layers of while loops takes at most $T\ell$ iterations, where $T=\log_{1/(1-\varepsilon)}kc_{\max}/c_{\min}$
and $\ell=\log_{1/(1-\varepsilon)}k$. In each iteration, it calls Algorithm \ref{algo2} to find an $\varepsilon$-NIS.  Recall that the for loop in Algorithm \ref{algo2} can be processed in parallel. Moreover, Algorithm \ref{algo3} can also be parallelized using $m'$ parallel queries. It follows that Algorithm \ref{algo2} takes at most $r$ rounds, where $r=\log_{\frac{1}{1-\bar{\varepsilon}}}(2mT\ell)/\varepsilon$. Hence, the running time of Algorithm \ref{algo1} is $O(T\ell r)=O(\frac{T\log k(\log m+\log (T\log k))}{\varepsilon^3})$. A summary of running time analysis is presented in Table \ref{tab2}.
\begin{table}[htpb]
\centering
\begin{tabular}{c c}
\hline
\textbf{Algorithm} & \textbf{\# of adaptive rounds}\\
\hline
Algorithm \ref{algo1} &  $O(T\ell\times {\textrm {number of rounds of Algorithm \ref{algo2}}})$\\
\hline
Algorithm \ref{algo2}  & $O(r\times {\textrm {number of rounds of Algorithm \ref{algo3}}})$ \\
\hline
Algorithm \ref{algo3} &  $O(1)$\\
\hline
\end{tabular}
  \caption{Summary of Running Time Analysis}\label{tab2}
\end{table}

Next we analyze the approximation ratio of Algorithm \ref{algo1}. Let $B$ be the output of Algorithm \ref{algo1}, then $B=D_1\cup \ldots\cup D_T$, $B_t^1=D_1\cup\cdots D_{t-1}$ for $t\geq 2$ and $B_{1}^1=\emptyset$, where $D_t=J_{t}^{1}\cup J_{t}^{2}\cup \ldots\cup J_{t}^{\ell}$ for $t\leq T$. The following claim shows that based on $D_1,D_2,\ldots, D_T$, we can construct a sequence of sets whose expected cost-effectiveness is monotone.

\vskip 0.2cm {\bf Claim 1.} We can construct a sequence of sets $D'_1,D'_2,\ldots,D'_{p}$ with $p\leq T$ such that with probability at least $1-3\varepsilon/2$,
\begin{equation}\label{eq0712-1}
\frac{\mathbb{E}[g_{B_{i}'}(D'_{i+1})]}{c(D'_{i+1})}\leq\frac{\mathbb{E}[g_{B_{i-1}'}(D'_{i})]}{c(D'_{i})}
\end{equation}
holds for any $i\leq p$, where $B_i'=D'_1\cup\ldots\cup D'_i$ for $i\leq p$.

\vskip 0.2cm For a set $B$, denote by $\beta(B)=\max_{v\in V}\frac{g_{B}(v)}{c(v)}$ the maximum marginal profit-to-cost ratio with respect to $B$. The next claim estimates the loss between the expected cost-effectiveness of $D'_i$ and the ratio $\beta(B_{i-1}')$.

\vskip 0.2cm {\bf Claim 2.} For any $1\leq i\leq p-1$, with probability at least $1-3\varepsilon/2$,
$$
\frac{\mathbb{E}[g_{B_{i}'}(D'_{i+1})]}{c(D'_{i+1})}\geq (1-\varepsilon)^4\beta(B_i').
$$


\vskip 0.2cm To complete the estimation of the approximation ratio, we consider an optimal solution $A^*$, and construct an auxiliary weight $w$ as follows. Denote $r_i=\mathbb{E}[g_{B_{i-1}'}(D'_i)]$ and $z_{v,i}=\mathbb{E}[g_{B_{i-1}'}(v)]$ for $1\leq i\leq p$ and $v\in A^*$. For any $v\in A^*$, define
$$
w(v)=\sum\limits_{i=1}^p(z_{v,i}-z_{v,i+1})\frac{c(D'_i)}{r_i},
$$
where $z_{v,p+1}=0$.

\vskip 0.2cm {\bf Claim 3.} With probability at least $1-3\varepsilon/2$, $c(B_p')\leq \sum_{v\in A^*}w(v)$.

\vskip 0.2cm {\bf Claim 4.} With probability at least $1-3\varepsilon/2$, $w(v)\leq c(v)\cdot\frac{H(\min\{\Delta,k\})}{1-4\varepsilon}$.

\vskip 0.2cm Combining Claim 3, Claim 4, and the union bound, with probability at least $1-3\varepsilon$, $c(B_p')\leq \frac{H(\min\{\Delta,k\})}{1-4\varepsilon}c(A^*)$. The approximation ratio is proved.
\end{proof}

The proofs of Claim 1-4 are presented in Appendix $F-I$ respectively.

\section{Completing the Last Piece of the Puzzle: Bounding $c_{\max}/c_{\min}$}
\label{sec3}
Note that the running time in Theorem \ref{thm0309} depends on $T=\log_{1/(1-\varepsilon)}kc_{\max}/c_{\min}$, where $c_{\max}/c_{\min}$ could be arbitrarily large.  To this end, we add a preprocessing step to Algorithm \ref{algo1} in order to create a modified instance with bounded $c_{\max}/c_{\min}$. The complete algorithm is presented in Algorithm \ref{algo4}. We first sort all elements in non-decreasing cost such that $c(v_1)\leq c(v_2)\leq \ldots\leq c(v_m)$. Then we compute the the minimum $j$ such that $g(\{v_1,\ldots,v_j\})\geq k$. Notice that $\{v_1,\ldots,v_j\}$ must be a feasible solution to our problem. Let $V_0\leftarrow \{v\in V\colon c(v)<\frac{\varepsilon}{mk}c(v_j)\}$ and $V_1\leftarrow \{v\in V\colon c(v)>jc(v_j)\}$. That is, $V_0$ contains all elements with low cost and $V_1$ contains all elements with high cost. Let $V^{mod}\leftarrow V-(V_0\cup V_1)$ denote the set of elements with ``moderate'' cost. Then we apply Algorithm \ref{algo1} to $V^{mod}$ to obtain an output $B^{mod}$. Because $V^{mod}$ contains all elements with moderate cost, we can bound the ratio $c_{\max}/c_{\min}$ as $c_{\max}/c_{\min}\leq kmj/\varepsilon$, where we abuse the notations to use $c_{\max}$ and $c_{\min}$ to denote the highest and lowest cost in $V^{mod}$ respectively. At last, $B^{mod}\cup V_0$ is returned as the final solution. 
We next present the main theorem of this paper.
\begin{theorem}\label{thm1018}
With probability at least $1-3\varepsilon$, for any $0<\varepsilon< 1/5$, Algorithm \ref{algo4} achieves an approximation ratio of at most $\frac{H(\min\{\Delta,k\})}{1-5\varepsilon}$. It takes $O(\frac{\log km\log k(\log m+\log\log mk)}{\varepsilon^4})$  rounds.
\end{theorem}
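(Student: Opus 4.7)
The plan is to combine the preprocessing of Algorithm \ref{algo4} with the guarantee of Theorem \ref{thm0309}, after framing the call to Algorithm \ref{algo1} on $V^{mod}$ as a MinSMC on ground set $V^{mod}$ with the shifted submodular function $g'(S):=g(S\cup V_0)$. Note that $g'$ remains monotone, integer-valued, and submodular, and the singleton value satisfies $g'(\{v\})=g_{V_0}(v)\leq g(\{v\})$ by submodularity, so $\Delta'\leq\Delta$. The output $B^{mod}$ of Algorithm \ref{algo1} on this reduced instance is combined with $V_0$ to form the final answer, and feasibility $g(B^{mod}\cup V_0)=g'(B^{mod})\geq k$ is immediate.

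To bound the cost, let $A^*$ be an optimal solution to the original MinSMC. I first show $A^*\cap V_1=\emptyset$: each $v\in V_1$ has $c(v)>jc(v_j)$, but $\{v_1,\ldots,v_j\}$ is a feasible solution of cost at most $jc(v_j)$, so if some $v\in V_1$ lay in $A^*$, then $c(A^*)\geq c(v)>c(\{v_1,\ldots,v_j\})$, contradicting optimality. Hence $A^*\setminus V_0\subseteq V^{mod}$, and since $g'(A^*\setminus V_0)=g(A^*\cup V_0)\geq g(A^*)=k$, the set $A^*\setminus V_0$ is feasible for the reduced instance with cost at most $c(A^*)$. Applying Theorem \ref{thm0309} then yields, with probability at least $1-3\varepsilon$,
\begin{equation*}
c(B^{mod})\leq \frac{H(\min\{\Delta',k\})}{1-4\varepsilon}\,c(A^*\setminus V_0)\leq \frac{H(\min\{\Delta,k\})}{1-4\varepsilon}\,c(A^*).
\end{equation*}
Moreover, the minimality of $j$ implies $g(\{v_1,\ldots,v_{j-1}\})<k$, so every feasible solution must contain at least one element from $\{v_j,\ldots,v_m\}$, giving $c(A^*)\geq c(v_j)$; combined with $|V_0|\leq m$ and $c(v)<\varepsilon c(v_j)/(mk)$ for all $v\in V_0$, this produces $c(V_0)<\frac{\varepsilon}{k}c(v_j)\leq \varepsilon c(A^*)$. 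Summing,
\begin{equation*}
c(B^{mod}\cup V_0)\leq \left(\frac{H(\min\{\Delta,k\})}{1-4\varepsilon}+\varepsilon\right)c(A^*)\leq \frac{H(\min\{\Delta,k\})}{1-5\varepsilon}c(A^*),
\end{equation*}
where the last inequality reduces to $H(\min\{\Delta,k\})\geq 1\geq (1-4\varepsilon)(1-5\varepsilon)$, valid for $\varepsilon\in(0,1/5)$.

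For the running time, each element of $V^{mod}$ has $\frac{\varepsilon}{mk}c(v_j)\leq c(v)\leq jc(v_j)$, so the ratio on the reduced instance satisfies $c_{\max}/c_{\min}\leq mkj/\varepsilon\leq m^2k/\varepsilon$. Consequently $T=\log_{1/(1-\varepsilon)}(k\cdot c_{\max}/c_{\min})=O(\log(mk)/\varepsilon)$, and substituting into the round complexity $O(T\log k(\log m+\log(T\log k))/\varepsilon^3)$ from Theorem \ref{thm0309} gives $O(\log(km)\log k(\log m+\log\log mk)/\varepsilon^4)$, matching the claim. The main obstacle is setting up the reduced instance cleanly: one must verify $\Delta'\leq \Delta$ via submodularity of marginals over $V_0$, lift the feasibility of $A^*$ to feasibility of $A^*\setminus V_0$ for the shifted function $g'$, and charge the residual cost $c(V_0)$ against $c(A^*)$ using the crucial observation that any feasible solution must include an element at least as expensive as $v_j$. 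Once these pieces are in place, the stated approximation ratio and adaptive round bound follow directly.
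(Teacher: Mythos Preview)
Your proof is correct and follows essentially the same approach as the paper's Appendix~J: bound $c(V_0)$ using $c(v_j)\le c(A^*)$, exclude $V_1$ from $A^*$ by comparing against the feasible set $\{v_1,\ldots,v_j\}$, show $A^*\setminus V_0$ is feasible for the reduced instance, invoke Theorem~\ref{thm0309}, and bound $c_{\max}/c_{\min}$ on $V^{mod}$ to control $T$. Your presentation is in fact slightly tidier in spots --- you obtain feasibility of $A^*\setminus V_0$ directly from monotonicity rather than via subadditivity, and you explicitly verify both $\Delta'\le\Delta$ and the final algebraic inequality $\frac{H}{1-4\varepsilon}+\varepsilon\le\frac{H}{1-5\varepsilon}$, which the paper leaves implicit.
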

The proof of Theorem \ref{thm1018} is given in Appendix J.
\begin{algorithm}[H]
\caption{MinSMC-Main}\label{algo4}
\textbf{Input:} MinSMC instance $\mathcal I=(V,g,c,k)$ and a constant $0<\varepsilon<1/4$.

\textbf{Output:} A subset $V'\subseteq V$ such that $g(V')\geq k$.
	
\begin{algorithmic}[1]
\STATE index all elements in increasing order of costs
\STATE $j\leftarrow \arg\min\{i\colon g(\{v_1,\ldots,v_i\})\geq k\}$
\STATE $V_0\leftarrow \{v\in V\colon c(v)<\frac{\varepsilon}{mk}c(v_j)\}$
\STATE $V_1\leftarrow \{v\in V\colon c(v)>jc(v_j)\}$
\STATE $V^{mod}\leftarrow V-(V_0\cup V_1)$
\STATE $g^{mod}\leftarrow g_{V_0}$ where $g_{V_0}$ is the marginal profit function of the set over $V_0$
\STATE $k^{mod}\leftarrow \max\{0,k-g(V_0)\}$
\STATE Let $ \mathcal I^{mod}=(V^{mod},g^{mod},c,k^{mod},\varepsilon)$
\STATE $B^{mod}\leftarrow $MinSMC-Par($\mathcal I^{mod}$)
\STATE $V'\leftarrow B^{mod}\cup V_0$
\end{algorithmic}
\end{algorithm}


\section{Conclusion and Discussion}\label{sec3}
In this paper, we present a parallel algorithm for the MinSMC to obtain a solution that achieves an  approximation ratio of at most $\frac{H(\min\{\Delta,k\})}{1-5\varepsilon}$, with probability at least $1-3\varepsilon$, in $O(\frac{\log km\log k(\log m+\log\log mk)}{\varepsilon^4})$ rounds, where $0<\varepsilon<1/5$ is a constant. How to obtain a near  $H(\min\{\Delta,k\})$-approximation  parallel algorithm using less number of rounds is a topic deserving further exploration.

\acks{This research is supported in part by National Natural Science Foundation of China (11901533, U20A2068, 11771013), and Zhejiang Provincial Natural Science Foundation of China (LD19A010001). }


\bibliography{reference}

\newpage
\appendix

\section{Proof of Lemma \ref{lem0303-2}}

\begin{proof}
Assume $|A|=a$, 
\begin{align*}
\ & \ \mathbb{E}[I_{t'}(X',x')]= \sum\limits_{X'=\{x_1,\ldots,x_{t'}\},x'}I[g_{B\cup\{x_1,\ldots,x_{t'}\}}(x')\geq (1-\varepsilon)\tau]\Pr[\textrm{$x_1,\ldots,x_{t'},x'$ is picked}]\\ \nonumber
=\ & \ \frac{1}{a\times (a-1)\times\cdots \times (a-t')}\sum\limits_{X'=\{x_1,\ldots,x_{t'}\},x'}I[g_{B\cup\{x_1,\ldots,x_{t'}\}}(x')\geq(1-\varepsilon)\tau]\\ \nonumber\leq \ & \ \frac{1}{a\times \cdots\times(a-t')}\sum\limits_{X'=\{x_1,\ldots,x_{t'}\},x'}I[g_{B\cup\{x_1,\ldots,x_{t}\}}(x')\geq(1-\varepsilon)\tau]\\
\nonumber = \ & \ \frac{[a-(t+1)]\times[a-(t+2)]\times\cdots\times (a-t')}{a\times \cdots\times(a-t')}\sum\limits_{X=\{x_1,\ldots,x_t\},x'}I[g_{B\cup\{x_1,\ldots,x_{t}\}}(x')\geq(1-\varepsilon)\tau]\\
\nonumber = \ & \ \frac{1}{a(a-1)\cdots(a-t)}\sum\limits_{X=\{x_1,\ldots,x_t\},x'}I[g_{B\cup\{x_1,\ldots,x_{t}\}}(x')\geq(1-\varepsilon)\tau]\\
\nonumber \leq \ & \ \mathbb{E}[I_t(X,x)],
\end{align*}
where the first inequality is due to the submodularity of function $g$, and the last inequality holds because $x$ is sampled from $A\setminus X\supseteq A\setminus X'$ and function $I$ is nonnegative.
\end{proof}

\section{Proof of Lemma \ref{lem0302-1}}

\begin{proof}
Let $Y_{m'}=\sum_{i=1}^{m'}I_{t_p,B_p,A_p,\tau,\epsilon}(X_i,x_i)$, and let $\mu=\mathbb{E}[I_{t_p,B_p,A_p,\tau,\epsilon}(X,x)]$. By the Chenorff bound (see \citep{Mitzenmacher}), for any $a>0$,
\begin{equation}\label{eq0302-2}
\Pr[|Y_{m'}-{m'}\mu|\geq a]\leq 2e^{-\frac{a^2}{2{m'}\mu}}.
\end{equation}
For $a=\frac{\bar{\varepsilon} m'}{2}$, using $m'=8\lceil\log(2/\delta)/\bar{\varepsilon}^2\rceil$ and $\mu\leq 1$, we have
\begin{equation}\label{eq0625}
\frac{a^2}{2m'\mu}\geq \log(2/\delta).
\end{equation}
Combining inequalities \eqref{eq0302-2} and \eqref{eq0625}, we have $\Pr[|Y_{m'}-{m'}\mu|\geq \frac{\bar{\varepsilon} m'}{2}]\leq \delta$. That is, $\Pr[|\bar{\mu}_p-\mu|\geq \frac{\bar{\varepsilon}}{2}]\leq \delta$. If $\bar{\mu}_p\leq 1-1.5\bar{\varepsilon}$, then $\Pr(\mu >1-\bar{\varepsilon})\leq \Pr(|\bar{\mu}_p-\mu|\geq {\bar{\varepsilon}}/2)\leq \delta$, that is, with probability at least $1-\delta$, we have $\mathbb{E}[I_{t_p,B_p,A_p,\tau,\epsilon}(X,x)]\leq 1-\bar{\varepsilon}$. The second half of the lemma can be proved similarly.
\end{proof}

\section{Proof of Lemma \ref{lem0308-1}}

\begin{proof}
For a $p\leq r$, if the for loop is executed $i=\log_{1+\bar{\varepsilon}}m$ rounds, then $t_p=|A_p|$, and $A_{p+1}$ becomes empty. Next, consider the case when the for loop is exited because of line \ref{line18}. Denote $C_p$ to be the event $\mathbb{E}[|A_{p+1}|]\leq (1-\varepsilon)|A_p|$. By Lemma \ref{lem0302}, $\Pr(\bar{C_p})\leq \delta=\varepsilon/(2rkT\ell)$. By the union bound,
\begin{align*}
\Pr[C_1\cap C_2\cap \ldots \cap C_r]=\ & \ 1-\Pr[\bar{C_1}\cup \bar{C_2}\cup \ldots\cup \bar{C_r}]\\ \nonumber
\geq \ & \ 1-\sum_{i=1}^{r}\Pr(\bar{C_i})\\ \nonumber
\geq \ & \ 1-\varepsilon/(2kT^2\ell).
\end{align*}
So, with probability at least $1-\varepsilon/(2nT^2\ell)$, we have
$$
\mathbb{E}[|A_r|]\leq (1-\bar{\varepsilon})^r\cdot \mathbb{E}[|A_1|]\leq \varepsilon/(2T\ell).
$$
Denote the event $C'$ to be $\mathbb{E}[|A_r|]\leq \varepsilon/(2T\ell)$. We have proved $\Pr(C')\geq 1-\varepsilon/(2kT^2\ell)$. Using Markov's inequality, $\Pr(|A_r|\geq 1|C')\leq \varepsilon/(2T^2\ell)$. So, $\Pr(A_r=\emptyset)\geq \Pr(C')\Pr(A_r=\emptyset|C')=\Pr(C')\left(1-\Pr(|A_r|\geq1 |C')\right)\geq (1-\varepsilon/(2T^2k\ell))(1-\varepsilon/(2T^2\ell))\geq 1-\varepsilon/(T^2\ell)$. The lemma is proved.
\end{proof}

\section{Proof of Corollary \ref{cor0308}}

\begin{proof}
For $1\leq t'\leq \ell$, let $C_t^{t'}$ be the event of $A_{t}^{t'}=\emptyset$. Lemma \ref{lem0308-1} says that $\Pr(C_t^{t'})\geq 1-\varepsilon/(T\ell)$ after $J_{t}^{t'}$ is computed (line \ref{line0704-1} of Algorithm \ref{algo1}). Hence,
\begin{align*}
\Pr(A_t=\emptyset)=\ & \ \Pr(C_t^1\cap C_t^2\ldots\cap C_t^{\ell})\\ \nonumber
=\ & \ 1-\Pr(\bar{C_t^1}\cup \bar{C_t^2}\cup \ldots\cup \bar{C_t^\ell})\\ \nonumber
\geq \ & \ 1-\sum_{i=1}^{\ell}\Pr(\bar{C_t^i})\\ \nonumber
\geq \ & \ 1-\varepsilon/T^2.
\end{align*}
The lemma is proved.
\end{proof}

\section{Proof of Corollary \ref{cor0303}}

\begin{proof}
By Lemma \ref{lem0303} and the union bound, with probability at least $1-\delta nr\ell$,
\begin{equation}\label{eq0706-1}
\sum_{t'=1}^{\ell}\mathbb{E}[g_{B_{t}^{t'}}(J_{t}^{t'})]\geq (1-\varepsilon)^2\sum_{t'=1}^{\ell}\sum_{v\in J_{t}^{t'}}g_{B_{t}^{t'}}(v).
\end{equation}
By the definition of $A_t^{t'}$ (line \ref{line11} of Algorithm \ref{algo1}), for any $t'\leq \ell$ and $v\in J_{t}^{t'}$, we have $\frac{g_{B_{t}^{t'}}(v)}{c(v)}\geq (1-\varepsilon)^{t}\beta$. Then by inequality \eqref{eq0706-1}, with probability at least $1-\delta nr\ell$,
\begin{align*}
\frac{\mathbb{E}[g_{B_{t}^1}(D_t)]}{c(D_t)}=\ & \ \frac{\sum_{t'=1}^{\ell}\mathbb{E}[g_{B_{t}^{t'}}(J_{t}^{t'})]}{c(D_t)}\\ \nonumber
\geq \ & \ \frac{(1-\varepsilon)^2\sum_{t'=1}^{\ell}\sum_{v\in J_{t}^{t'}}g_{B_{t}^{t'}}(v)}{c(D_t)}\\ \nonumber
=\ & \ \frac{(1-\varepsilon)^2\sum_{t'=1}^{\ell}\sum_{v\in J_{t}^{t'}}g_{B_{t}^{t'}}(v)}{\sum_{t'=1}^{\ell}\sum_{v\in J_{t}^{t'}}c(v)}\\ \nonumber
\geq\ & \ (1-\varepsilon)^{t+2}\beta.
\end{align*}
The lemma is proved.
\end{proof}

\section{Proof of Claim 1 in Theorem \ref{thm0309}}

\begin{proof}
If $D_1,D_2,\ldots,D_T$ satisfy \eqref{eq0712-1} for all $i=1,\ldots,T$, then the claim holds by letting $p=T$ and $D_i'=D_i$ for $i=1,\ldots,T$. Otherwise, let $t$ be the minimum index with
\begin{equation}\label{eq0713}
\frac{\mathbb{E}[g_{B_{t}^1}(D_{t})]}{c(D_{t})}>\frac{\mathbb{E}[g_{B_{t-1}^1}(D_{t-1})]}{c(D_{t-1})}.
\end{equation}
Let $D'_1=D_1,D'_2=D_2, \ldots, D'_{t-3}=D_{t-3}$. Inequality \eqref{eq0712-1} holds for $i=1,\ldots,t-4$. We next focus on constructing $D_i'$ for $i\geq t-2$ adaptively, starting from $i=t-2$. Note that for $i\geq t-2$ , $D_i'$ may contain multiple sets from $\{D_j\colon t-2\leq j\leq T\}$.

\vskip 0.2cm {\bf Some preparations.} We first prove that with probability at least $1-\frac{\varepsilon}{2T^2}$, 
\begin{equation}\label{eq0308-4}
\frac{\mathbb{E}[g_{B_{t-1}^1}(D_{t-1}\cup D_t)]}{c(D_{t-1}\cup D_t)}\geq (1-\varepsilon)^{t+1}\beta,
\end{equation}
and with probability at least $1-\frac{\varepsilon}{T}$,
\begin{equation}\label{eq0308-4-2}
\frac{\mathbb{E}[g_{B_{t-1}^1}(D_{t-1}\cup D_t)]}{c(D_{t-1}\cup D_t)}\leq (1-\varepsilon)^{t-1}\beta.
\end{equation}
In fact, by Corollary \ref{cor0303}, with probability at least $1-\delta nr\ell =1-\frac{\varepsilon}{2T^2}$,
\begin{equation}\label{eq0713-1}
\frac{\mathbb{E}[g_{B_{t-1}^1}(D_{t-1})]}{c(D_{t-1})}\geq (1-\varepsilon)^{t+1}\beta.
\end{equation}
Observe that
\begin{equation}\label{eq0715-1}
\frac{\mathbb{E}[g_{B_{t-1}^1}(D_{t-1}\cup D_t)]}{c(D_{t-1}\cup D_t)} =\frac{\mathbb{E}[g_{B_t^1}(D_t)]+\mathbb{E}[g_{B_{t-1}^1}(D_{t-1})]}{c(D_t)+c(D_{t-1})}.
\end{equation}
\eqref{eq0308-4} follows from inequalities \eqref{eq0713}, \eqref{eq0713-1}, \eqref{eq0715-1}, and the observation that $B_{t}^1=B_{t-1}^1\cup D_{t-1}$.
By Corollary \ref{cor0308}, after $J_{t-1}^{\ell}$ is computed,  the primary bucket $A_{t-1}$ becomes empty with probability at least $1-\varepsilon/T^2$. By the union bound, with probability at least $1-\varepsilon/T$, all primary buckets $A_i$ with $i\leq t-1$ become empty, and this implies that
\begin{equation}\label{eq0802-3}
\mbox{every remaining element $v$ satisfies $g_{B_t^{1}}(v)/c(v)<(1-\varepsilon)^{t-1}\beta$}.
\end{equation}

Thus, by the submodularity of function $g_{B_t^1}$, we have $g_{B_t^1}(D_t)\leq\sum_{v\in D_t}g_{B_t^1}(v)<(1-\varepsilon)^{t-1}\beta c(D_t)$. Combining this with inequality \eqref{eq0713}, both $\frac{\mathbb{E}[g_{B_{t}^1}(D_{t})]}{c(D_{t})}$ and $\frac{\mathbb{E}[g_{B_{t-1}^1}(D_{t-1})]}{c(D_{t-1})}$ are upper bounded by $(1-\varepsilon)^{t-1}\beta$ with probability at least $1-\varepsilon/T$. This, together with \eqref{eq0715-1}, implies \eqref{eq0308-4-2}.

Similar to the proofs of \eqref{eq0308-4} and \eqref{eq0308-4-2}, we can prove that each of the following statements holds with probability at least $1-\frac{3\varepsilon}{2T}$:

If $\frac{\mathbb{E}[g_{B_{t-2}^1}(D_{t-2})]}{c(D_{t-2})}\geq (1-\varepsilon)^{t-1}\beta$ and $\frac{\mathbb{E}[g_{B_{t-1}^1}(D_{t-1}\cup D_t)]}{c(D_{t-1}\cup D_t)}< \frac{\mathbb{E}[g_{B_{t+1}^1}(D_{t+1})]}{c(D_{t+1})}$, then
\begin{equation}\label{eq0801-1}
(1-\varepsilon)^{t+1}\beta\leq \frac{\mathbb{E}[g_{B_{t-1}^1}(D_{t-1}\cup D_t\cup D_{t+1})]}{c(D_{t-1}\cup D_t\cup D_{t+1})}\leq (1-\varepsilon)^{t-1}\beta.
\end{equation}

If $\frac{\mathbb{E}[g_{B_{t-3}'}(D_{t-2})]}{c(D_{t-2})}< (1-\varepsilon)^{t-1}\beta$ and $\frac{\mathbb{E}[g_{B_{t-2}^1}(D_{t-2}\cup D_{t-1}\cup D_t)]}{c(D_{t-2}\cup D_{t-1}\cup D_t)}\geq \frac{\mathbb{E}[g_{B_{t+1}^1}(D_{t+1})]}{c(D_{t+1})}$, then
\begin{equation}\label{eq0801-2}
(1-\varepsilon)^{t+1}\beta\leq \frac{\mathbb{E}[g_{B_{t-2}^1}(D_{t-2}\cup D_{t-1}\cup D_t)]}{c(D_{t-2}\cup D_{t-1}\cup D_t)}\leq (1-\varepsilon)^{t-1}\beta.
\end{equation}

If $\frac{\mathbb{E}[g_{B_{t-3}'}(D_{t-2})]}{c(D_{t-2})}< (1-\varepsilon)^{t-1}\beta$ and $\frac{\mathbb{E}[g_{B_{t-2}^1}(D_{t-2}\cup D_{t-1}\cup D_t)]}{c(D_{t-2}\cup D_{t-1}\cup D_t)}< \frac{\mathbb{E}[g_{B_{t+1}^1}(D_{t+1})]}{c(D_{t+1})}$, then
\begin{equation}\label{eq0801-3}
(1-\varepsilon)^{t+1}\beta\leq \frac{\mathbb{E}[g_{B_{t-2}^1}(D_{t-2}\cup D_{t-1}\cup D_t\cup D_{t+1})]}{c(D_{t-2}\cup D_{t-1}\cup D_t\cup D_{t+1})}\leq (1-\varepsilon)^{t-1}\beta.
\end{equation}

Observe that in each of (\ref{eq0801-1}),(\ref{eq0801-2}),(\ref{eq0801-3}), the lower bound holds with probability at least $1-\frac{\varepsilon}{2T^2}$ and the upper bound holds with probability at least $1-\frac{\varepsilon}{T}$. Hence, by the union bound, the probability that each of (\ref{eq0801-1}),(\ref{eq0801-2}),(\ref{eq0801-3}) holds is at least $1-(\frac{\varepsilon}{2T^2}+\frac{\varepsilon}{T})\geq 1-\frac{3\varepsilon}{2T}$.

\vskip 0.2cm {\bf Construction of $D_{t-2}'$.} Next, we show how to construct $D_{t-2}'$ such that it satisfies property \eqref{eq0712-1} and the following two properties:

($\romannumeral1$) If  $D'_{t-2}=D_{t-2}$, then at the same time we can construct $D'_{t-1}$ satisfying that it contains more than one sets from $\{D_j\colon 1\leq j\leq T\}$, and with probability at least $1-\frac{3\varepsilon}{2T}$,
\begin{equation}\label{eq0730-1}
(1-\varepsilon)^{t+1}\beta \leq \frac{\mathbb{E}[g_{B_{t-2}'}(D'_{t-1})]}{c(D'_{t-1})}\leq (1-\varepsilon)^{t-1}\beta.
\end{equation}
Else $D'_{t-2}$ contains more than one sets from $\{D_j\colon 1\leq j\leq T\}$, and with probability at least $1-\frac{3\varepsilon}{2T}$,
\begin{equation}\label{eq0724-9}
(1-\varepsilon)^{t+1}\beta\leq \frac{\mathbb{E}[g_{B_{t-3}'}(D'_{t-2})]}{c(D'_{t-2})}\leq (1-\varepsilon)^{t-1}\beta.
\end{equation}
Furthermore, for both inequalities \eqref{eq0730-1} and \eqref{eq0724-9}, the lower bound $(1-\varepsilon)^{t+1}\beta$ holds with probability at least $1-\frac{\varepsilon}{2T^2}$.


($\romannumeral2$) Suppose $D'_i=D_j\cup D_{j+1}\cup \ldots\cup D_{j'}$, then with probability at least $1-\frac{3\varepsilon}{2T}$,
\begin{equation}\label{eq0730-2}
\frac{\mathbb{E}[g_{B_{i-1}'}(D'_{i})]}{c(D'_{i})}\geq \frac{\mathbb{E}[g_{B_{j'+1}^1}(D_{j'+1})]}{c(D_{j'+1})}.
\end{equation}

To complete the construction, we consider two cases.

\vskip 0.2cm {\bf Case 1.} $\frac{\mathbb{E}[g_{B_{t-3}'}(D_{t-2})]}{c(D_{t-2})}\geq (1-\varepsilon)^{t-1}\beta$.

In this case, let $D'_{t-2}=D_{t-2}$. Then \eqref{eq0712-1} holds for $i=t-3$ by the choice of $t$. Furthermore, by \eqref{eq0308-4-2} and the condition of Case 1, with probability at least $1-\frac{\varepsilon}{T}$, we have
\begin{equation}\label{eq0724-1}
\frac{\mathbb{E}[g_{B_{t-3}'}(D'_{t-2})]}{c(D'_{t-2})}\geq \frac{\mathbb{E}[g_{B_{t-2}'}(D_{t-1}\cup D_t)]}{c(D_{t-1}\cup D_t)}.
\end{equation}

In this case, we shall construct $D_{t-1}'$ by distinguishing two subcases.

{\bf Subcase 1.1}$\frac{\mathbb{E}[g_{B_{t-2}'}(D_{t-1}\cup D_t)]}{c(D_{t-1}\cup D_t)}\geq \frac{\mathbb{E}[g_{B_{t+1}^1}(D_{t+1})]}{c(D_{t+1})}$.

In this subcase, let $D'_{t-1}=D_{t-1}\cup D_t$. By \eqref{eq0724-1}, with probability at least $1-\varepsilon/T$,
$$\frac{\mathbb{E}[g_{B_{t-3}'}(D'_{t-2})]}{c(D'_{t-2})}\geq \frac{\mathbb{E}[g_{B_{t-2}'}(D'_{t-1})]}{c(D'_{t-1})},$$
which satisfies \eqref{eq0712-1} for $i=t-2$. By the condition of Subcase 1.1,
$$\frac{\mathbb{E}[g_{B_{t-2}'}(D'_{t-1})]}{c(D'_{t-1})}\geq \frac{\mathbb{E}[g_{B_{t+1}^1}(D_{t+1})]}{c(D_{t+1})},$$
which satisfies \eqref{eq0730-2} for $i=t-1$. Furthermore, \eqref{eq0730-1} follows from \eqref{eq0308-4} and \eqref{eq0308-4-2}.

\vskip 0.2cm {\bf Subcase 1.2} $\frac{\mathbb{E}[g_{B_{t-2}'}(D_{t-1}\cup D_t)]}{c(D_{t-1}\cup D_t)}< \frac{\mathbb{E}[g_{B_{t+1}^1}(D_{t+1})]}{c(D_{t+1})}$.

In this subcase, let $D'_{t-1}=D_{t-1}\cup D_t\cup D_{t+1}$. Then \eqref{eq0730-1} holds by \eqref{eq0801-1}.  Combining the right hand side of \eqref{eq0801-1} with the condition of Case 1, with probability at least $1-\frac{\varepsilon}{T}$,
$$ \frac{\mathbb{E}[g_{B_{t-3}'}(D'_{t-2})]}{c(D'_{t-2})}\geq\frac{\mathbb{E}[g_{B_{t-2}'}(D'_{t-1})]}{c(D'_{t-1})},$$
which satisfies \eqref{eq0712-1} for $i=t-2$. Similar to the proof of inequality \eqref{eq0308-4-2}, under the condition of Subcase 1.2, with probability at least $1-\frac{\varepsilon}{T}$,
\begin{equation}\label{eq0724-5}
\frac{\mathbb{E}[g_{B_{t+2}^1}(D_{t+2})]}{c(D_{t+2})}\leq (1-\varepsilon)^{t+1}\beta.
\end{equation}
Combining inequality \eqref{eq0724-5} with the left hand side of \eqref{eq0801-1}, by the union bound, with probability at least $1-\frac{3\varepsilon}{2T}$,
\begin{equation}\label{eq0725}
\frac{\mathbb{E}[g_{B_{t-2}'}(D'_{t-1})]}{c(D'_{t-1})}\geq \frac{\mathbb{E}[g_{B_{t+2}^1}(D_{t+2})]}{c(D_{t+2})},
\end{equation}
which satisfies \eqref{eq0730-2} for $i=t-1$.

\vskip 0.2cm {\bf Case 2.} $\frac{\mathbb{E}[g_{B_{t-3}'}(D_{t-2})]}{c(D_{t-2})}< (1-\varepsilon)^{t-1}\beta$.


We further distinguish two subcases.

\vskip 0.2cm {\bf Subcase 2.1} $\frac{\mathbb{E}[g_{B_{t-3}'}(D_{t-2}\cup D_{t-1}\cup D_t)]}{c(D_{t-2}\cup D_{t-1}\cup D_t)}\geq \frac{\mathbb{E}[g_{B_{t+1}^1}(D_{t+1})]}{c(D_{t+1})}$.

In this subcase, let $D'_{t-2}=D_{t-2}\cup D_{t-1}\cup D_t$. Then \eqref{eq0730-2} for $i\leq t-2$ follows from the condition of this subcase, and \eqref{eq0724-9} follows from \eqref{eq0801-2}. By Corollary \ref{cor0303}, with probability at least $1-\frac{\varepsilon}{2T^2}$,
\begin{equation}\label{eq0724-7}
\frac{\mathbb{E}[g_{B_{t-4}}(D'_{t-3})]}{c(D'_{t-3})}\geq (1-\varepsilon)^{t-1}\beta.
\end{equation}
Combining \eqref{eq0724-7} with the right hand side of inequalities \eqref{eq0801-2}, by the union bound, with probability at least $1-\frac{3\varepsilon}{2T}$,
\begin{equation}\label{eq0726}
\frac{\mathbb{E}[g_{B_{t-4}}(D'_{t-3})]}{c(D'_{t-3})}\geq \frac{\mathbb{E}[g_{B_{t-3}}(D'_{t-2})]}{c(D'_{t-2})},
\end{equation}
and thus \eqref{eq0712-1} holds for $i=t-3$.

\vskip 0.2cm {\bf Subcase 2.2} $\frac{\mathbb{E}[g_{B_{t-3}'}(D_{t-2}\cup D_{t-1}\cup D_t)]}{c(D_{t-2}\cup D_{t-1}\cup D_t)}< \frac{\mathbb{E}[g_{B_{t+1}^1}(D_{t+1})]}{c(D_{t+1})}$.

In this subcase, let $D'_{t-2}=D_{t-2}\cup D_{t-1}\cup D_t\cup D_{t+1}$. Then \eqref{eq0724-9} follows from \eqref{eq0801-3}. Combining inequality \eqref{eq0724-5} with the left hand side of inequality \eqref{eq0801-3}, we have \eqref{eq0730-2} for $i\leq t-2$. Combining \eqref{eq0724-7} with the right hand side of inequality \eqref{eq0801-3}, with probability at least $1-\frac{3\varepsilon}{2T}$, \eqref{eq0712-1} holds for $i=t-3$.

The construction for $D_{t-2}'$ is completed.

\vskip 0.2cm {\bf Construction of the $D_i'$s for $i>t-2$.} Once $D_{t-2}'$ is constructed, we follow a similar procedure to construct $D_{t-1}'$. That is, we  find the next index satisfying inequality \eqref{eq0713} and construct $D_{t-1}'$ according to the same approach used to construct $D_{t-2}'$. This procedure iterates until all $D_i'$s are constructed adaptively.
%
%
%

Claim 1 is proved.
\end{proof}

\section{Proof of Claim 2 in Theorem \ref{thm0309}}

\begin{proof}
Assume $D'_i=D_j\cup\cdots\cup D_{j'}$ with $j'\geq j$. Note that $B'_{i-1}=B_j^1$. If $D'_i=D_j$, then by Corollary \ref{cor0303}, with probability at least $1-\delta nr\ell =1-\frac{\varepsilon}{2T^2}$,
\begin{equation}\label{eq0726-3} \frac{\mathbb{E}[g_{B_{i-1}'}(D'_{i})]}{c(D'_{i})}=\frac{\mathbb{E}[g_{B_j^1}(D_j)]}{c(D_j)}\geq (1-\varepsilon)^{j+2}\beta.
\end{equation}
According to Claim 1, there are four possible ways of constructing $D'_{i_t}$: \begin{align*}
& D'_{i_t}=D_{t-1}\cup D_{t},\\
& D'_{i_t}=D_{t-2}\cup D_{t-1}\cup D_t,\\
& D'_{i_t}=D_{t-2}\cup D_{t-1}\cup D_t\cup D_{t+1},\\
& D'_{i_t}=D_{t-1}\cup D_{t}\cup D_{t+1}.
\end{align*}
Hence, a $D'_i=D_j\cup\cdots\cup D_{j'}$ with $j'>j$ must be in the form of $D'_i=D'_{i_{(j+1)}}$ or $D'_i=D'_{i_{(j+2)}}$. By $(\romannumeral1)$, with probability at least $1-\frac{\varepsilon}{2T^2}$,
$$
\frac{\mathbb{E}[g_{B_{i-1}'}(D'_{i})]}{c(D'_{i})} \geq (1-\varepsilon)^{j+2}\beta \ \mbox{or} \ (1-\varepsilon)^{j+3}\beta.
$$
Combining this with inequality \eqref{eq0726-3}, 
with probability at least $1-\frac{\varepsilon}{2T^2}$,
\begin{equation}\label{eq0726-4}
\frac{\mathbb{E}[g_{B_{i-1}'}(D'_{i})]}{c(D'_{i})}\geq (1-\varepsilon)^{j+3}\beta.
\end{equation}
Similar to the proof of \eqref{eq0802-3}, after $D_1,\ldots,D_{j-1}$ are chosen, for any remaining element $v$, with probability at least $1-\varepsilon/T$, $\mathbb{E}[g_{B_j^1}(v)]/c(v)\leq (1-\varepsilon)^{j-1}\beta$. Hence
$$
\beta(B_{i-1}')=\beta(B_j^1)\leq (1-\varepsilon)^{j-1}\beta.
$$
Combining this with inequality \eqref{eq0726-4}, by the union bound, with probability at least $1-3\varepsilon/2T$,
a {\em fixed} $D'_i$ satisfies Claim 2. Again by the union bound, for {\em every} $1\leq i\leq p-1$, Claim 2 holds with probability at least $1-3\varepsilon/2$.
\end{proof}

\section{Proof of Claim 3 in Theorem \ref{thm0309}}

\begin{proof}
By the definition of $w(v)$, we have
\begin{align}\label{eq0328-5}
w(A^*)=&\sum_{v\in A^*}\sum_{i=1}^p(z_{v,i}-z_{v,i+1})\frac{c(D_i')}{r_i} \nonumber \\
=&\frac{c(D_1')}{r_1}\sum\limits_{v\in A^*}z_{v,1}+
\sum\limits_{i=2}^p \left(\frac{c(D'_i)}{r_i}-\frac{c(D'_{i-1})}{r_{i-1}}\right)\sum\limits_{v\in A^*}z_{v,i}.
\end{align}
Similarly, $c(B_p')$ can be rewritten as follows:
\begin{align}
c(B_p')= &  \sum_{i=1}^pc(D_i')=\sum_{i=1}^pr_i\cdot\frac{c(D'_i)}{r_i}\nonumber\\
= & \sum\limits_{i=1}^p \left(\sum\limits_{j=i}^p r_j-\sum\limits_{j=i+1}^p r_j\right)\frac{c(D'_i)}{r_i} \nonumber \\
= & \frac{c(D'_1)}{r_1}\sum\limits_{j=1}^p r_j+\sum\limits_{i=2}^p\left(\frac{c(D'_i)}{r_i}-\frac{c(D'_{i-1})}{r_{i-1}}\right)\sum\limits_{j=i}^p r_j\label{eq0328-6}
\end{align}
By Claim 1, with probability at least $1-3\varepsilon/2$,
\begin{equation}\label{eq0706-3}
\frac{c(D'_i)}{r_i}\geq \frac{c(D'_{i-1})}{r_{i-1}}\ \mbox{holds for any $2\leq i\leq p$}.
\end{equation}
Then, comparing \eqref{eq0328-5} and \eqref{eq0328-6}, to prove Claim 3, it suffices to prove that
\begin{equation}\label{eq0328-7}
\sum\limits_{j=i}^pr_j\leq \sum\limits_{v\in A^*}z_{v,i}\ \mbox{for any $1\leq i\leq p$}.
\end{equation}

The left term of inequality \eqref{eq0328-7} can be written as
\begin{equation}\label{eq0328-1}
\sum\limits_{j=i}^pr_j=\sum\limits_{j=i}^p \left(\mathbb{E}[g(B_{j}')]-\mathbb{E}[g(B_{j-1}')]\right)
=\mathbb{E}[g(B_{p}')]-\mathbb{E}[g(B_{i-1}')]=k-\mathbb{E}[g(B_{i-1}')].
\end{equation}
Suppose $A^*=\{v_1,\ldots,v_q\}$. Denote $A_j^*=\{v_1,\ldots,v_j\}$ for $j=1,\ldots,q$, and $A_0^*=\emptyset$. The right term of inequality \eqref{eq0328-7} can be bounded by
\begin{align}\label{eq0328-8}
\sum\limits_{v\in A^*}z_{v,i}&=\sum_{j=1}^q\mathbb{E}[g_{B_{i-1}'}(v_j)] \nonumber \\
&\geq \sum_{j=1}^q\mathbb{E}[g_{B_{i-1}'\cup A_{j-1}^*}(v_j)] \nonumber \\
& =\sum_{j=1}^q\big(\mathbb{E}[g(B_{i-1}'\cup A_j^*)]-\mathbb{E}[g(B_{i-1}'\cup A_{j-1}^*)]\big)\nonumber \\
&=\mathbb{E}[g(B_{i-1}'\cup A^*)]-\mathbb{E}[g(B_{i-1}')]=k-\mathbb{E}[g(B_{i-1}')],
\end{align}
where the inequality is due to the submodularity of $g$. Inequality \eqref{eq0328-7} follows from \eqref{eq0328-1} and \eqref{eq0328-8}, and thus Claim 3 is proved.
\end{proof}

\section{Proof of Claim 4 in Theorem \ref{thm0309}}

\begin{proof}
By Claim 2, for any $v\in V$ and any $1\leq i\leq p-1$, with probability at least $1-3\varepsilon/2$,
\begin{equation}\label{eq0309-1}
(1-\varepsilon)^4\frac{\mathbb{E}[g_{B_{i}'}(v)]}{c(v)}\leq \frac{\mathbb{E}[g_{B_{i}'}(D'_{i+1})]}{c(D'_{i+1})}.
\end{equation}
It follows that
\begin{align}\label{eq18-03-21-4}
w(v)
=& \sum\limits_{i=1}^p(z_{v,i}-z_{v,i+1})\frac{c(D_i')}{r_i}  \nonumber\\
\leq & \sum\limits_{i=1}^{p}(z_{v,i}-z_{v,i+1})\frac{c(v)}{(1-\varepsilon)^4z_{v,i}}\nonumber\\
\leq & c(v)\cdot H(z_{v,1})/(1-4\varepsilon).
\end{align}
Then Claim 4 follows from $z_{v,1}\leq \max_{v\in V}g(v)\leq \min\{\Delta,k\}$ and $(1-\varepsilon)^4\geq 1-4\varepsilon$.
\end{proof}

\section{Proof of Theorem \ref{thm1018}}

\begin{proof}

Suppose $A^*$ is an optimal solution of our original problem. Recall that
\[j=\arg\min\{i\colon g(\{v_1,\ldots,v_i\})\geq k\}\] for ordered elements $c(v_1)\leq c(v_2)\leq \cdots \leq c(V_m)$. We first show that
\begin{equation}\label{eq0224}
c(v_j)\leq c(A^*).
\end{equation}
To prove this, suppose $c(v_{j'})=\max\{c(v_i)\colon v_i\in A^*\}$. If $c(v_{j'})<c(v_j)$, then $A^*\subseteq \{v_1,\ldots,v_{j'}\}$. By the monotonicity of $g$, we have $k=g(A^*)\leq g(\{v_1,\ldots,v_{j'}\})$, contradicting to the definition of $j$. So, $c(v_j)\leq c(v_{j'})\leq c(A^*)$. On the other hand,
\begin{equation}\label{eq0224-1}
c(A^*)\leq j\cdot c(v_j).
\end{equation}
This is because $g(\{v_1\cup\ldots\cup v_j\})\geq k$, $\{v_1,\ldots,v_j\}$ is a feasible solution whose cost is $c(v_1)+\cdots+c(v_j)\leq jc_j$. Since $A^*$ has the minimum cost among all feasible solutions, \eqref{eq0224-1} follows. As a consequence of \eqref{eq0224}, we have
$$c(V_0)\leq |V_0|\frac{\varepsilon}{mk}{c(v_j)}\leq \frac{\varepsilon}{k}c(A^*),$$ where $V_0= \{v\in V\colon c(v)<\frac{\varepsilon}{mk}c(v_j)\}$.
As a consequence of \eqref{eq0224-1}, $A^*$ must not contain any elements $v$ such that $c(v)>jc(v_j)$. So, the optimal solution for the instance $(V-V_1,g,c,k)$ is the same as $A^*$, where $V_1= \{v\in V\colon c(v)>jc(v_j)\}$.

For instance $\mathcal I^{mod}$, we have $\frac{c_{\max}}{c_{\min}}\leq \frac{jc(v_j)}{\frac{\varepsilon}{mk}{c(v_j)}}=\frac{jmk}{\varepsilon}\leq \frac{m^2k}{\varepsilon}$. Substituting this bound into the expression of $T$,
by Theorem \ref{thm0309}, Algorithm \ref{algo1} produces a solution $B^{mod}$ in $O(\frac{\log km\log k(\log m+\log\log mk)}{\varepsilon^4})$ rounds such that
$$
c(B^{mod})\leq \frac{H(\Delta)}{1-4\varepsilon}opt',
$$
where $opt'$ is the optimal value for instance $\mathcal I^{mod}$. By the submodularity and the monotonicity of $g$, we have $k=g(A^*\cup V_0)\leq g(A^*\setminus V_0)+g(V_0)$, and thus $g(A^*\setminus V_0)\geq k^{mod}$, that is, $A^*\setminus V_0$ is a feasible solution to $\mathcal I^{mod}$. It follows that $opt'\leq c(A^*\setminus V_0)$, and thus
the output of $V'$ in Algorithm \ref{algo3} has cost
$$c(V')=c(B^{mod})+c(V_0)\leq \left(\frac{H(\Delta)}{1-4\varepsilon}+\frac{\varepsilon}{k}\right)c(A^*)\leq \frac{H(\Delta)}{1-5\varepsilon}c(A^*).$$
The theorem is proved.
\end{proof}

\end{document}